\newcommand{\fullonly}[1]{\ifthenelse{\boolean{display_full}}{#1}{}}
\newcommand{\lightonly}[1]{\ifthenelse{\boolean{display_light}}{#1}{}}
\newcommand{\Nat}{\mathbb{N}}
\newcommand{\Z}{\mathbb{Z}}
\newcommand{\Zrond}{\mathcal{Z}}
\newcommand{\C}{\EuScript{C}}
\newcommand{\X}{\mathcal{X}}
\newcommand{\finiteset}[2]{\{#1,\ldots,#2\}}
\newcommand{\moins}{\backslash}
\newcommand{\one}{\mathbb{1}}
\newcommand{\crochet}[1]{\left\llbracket#1\right\rrbracket}
\newcommand{\Int}{\mathcal{I}}
\newcommand{\act}[1]{\mathsf{act}(#1)}
\newcommand{\mulpp}{\mathsf{mul}_{+}}
\newcommand{\mulmm}{\mathsf{mul}_{-}}
\newcommand{\st}{\ |\ }
\newcommand{\Prog}{\EuScript{P}}
\newcommand{\ExaProg}{\EuScript{E}}
\newcommand{\LTS}{\EuScript{S}}
\newcommand{\CS}{\EuScript{P}}
\newcommand{\Aut}{\EuScript{A}}
\newcommand{\pow}[1]{{\mathbb P}(#1)}
\newcommand{\Var}{\mathcal{X}}
\newcommand{\Id}{\one}
\newcommand{\rel}[2]{\mathcal{R}\ifthenelse{\equal{#1}{}}{}{_{#1}}\ifthenelse{\equal{#2}{}}{}{({#2})}}
\newcommand{\post}{\mathsf{post}}
\newcommand{\poststar}[1]{\post^*\ifthenelse{\equal{#1}{}}{}{_{#1}}}
\newcommand{\dbrkts}[1]{\crochet{#1}}
\newcommand{\lst}[1]{\Lambda_{#1}}
\newcommand{\varren}{\kappa}
\newcommand{\valren}{\overrightarrow{\varren}}
\newcommand{\backvalren}{\overleftarrow{\varren}}
\newcommand{\onlyforpaperversion}[2]{%
  \ifthenelse{%
    \equal{#1}{\paperversion}}%
  {#2}{}%
}
\lstdefinelanguage{myalgo}{%
  morekeywords={if,then,else,repeat,while,for,to,forall,do,compute,call,return}
}
\newcommand{\paperversion}{full}
\begin{document}

\onlyforpaperversion{full}{\pagestyle{plain}}

\title{Accelerated Data-flow Analysis} 

\author{
  Jérôme Leroux \and
  Grégoire Sutre
}

\institute{
  LaBRI, Université de Bordeaux, CNRS\\
  Domaine Universitaire, 351, cours de la Libération, 33405 Talence, France\\
  \email{\{leroux, sutre\}@labri.fr}
}

\maketitle


\begin{abstract}
Acceleration in symbolic verification consists in computing the exact effect of some control-flow loops in order to speed up the iterative fix-point computation of reachable states.
Even if no termination guarantee is provided in theory, successful results were obtained in practice by different tools implementing this framework. In this paper, the acceleration framework is extended to data-flow analysis.  Compared to a classical widening/narrowing-based abstract interpretation, the loss of precision is controlled here by the choice of the abstract domain and does not depend on the way the abstract value is computed.
Our approach is geared towards precision, but we don't loose efficiency on the way.  Indeed,
we provide a cubic-time acceleration-based algorithm for solving interval constraints with full multiplication.
\end{abstract}

\section{Introduction}
\label{sec:introduction}

Model-checking safety properties on a given system usually reduces to the computation of a precise enough invariant of the system.  In traditional symbolic verification, the set of all reachable (concrete) configurations is computed iteratively from the initial states by a standard fix-point computation.  This reachability set is the most precise invariant, but quite often (in particular for software systems) a much coarser invariant is sufficient to prove correctness of the system.  Data-flow analysis, and in particular abstract interpretation~\cite{Cousot:1977:POPL}, provides a powerful framework to develop analysis for computing such approximate invariants.

\smallskip

A data-flow analysis of a program basically consists in the choice of a (potentially infinite) complete lattice of data properties for program variables together with transfer functions for program instructions.  The merge over all path (MOP) solution, which provides the most precise abstract invariant, is in general over-approximated by the minimum fix-point (MFP) solution, which is computable by Kleene fix-point iteration.  However the computation may diverge and \emph{widening/narrowing operators} are often used in order to enforce convergence at the expense of precision \cite{Cousot:1977:POPL, Cousot:1992:PLILP}.  While often providing very good results, the solution computed with widenings and narrowings may not be the MFP solution.  This may lead to abstract invariants that are too coarse to prove safety properties on the system under check.

\smallskip

Techniques to help convergence of Kleene fix-point iterations have also been investigated in symbolic verification of infinite-state systems.  In these works, the objective is to compute the (potentially infinite) reachability set for automata with variables ranging over unbounded data, such as counters, clocks, stacks or queues.  So-called \emph{acceleration} techniques (or \emph{meta-transitions}) have been developped~\cite{Boigelot:1994:CAV, Boigelot:1997:SAS, Comon:1998:CAV, Finkel:2003:IC, Finkel:2002:FSTTCS} to speed up the iterative computation of the reachability set.  Basically, acceleration consists in computing in one step the effect of iterating a given loop (of the control flow graph).  Accelerated symbolic model checkers such as \textsc{Lash}~\cite{LASH}, \textsc{TReX}~\cite{ABS-CAV01}, and \textsc{Fast}~\cite{Bardin:2003:CAV} successfully implement this approach.

\paragraph{Our contribution. }
In this paper, we extend acceleration techniques to data-flow analysis and we apply these ideas to interval analysis.  Acceleration techniques for (concrete) reachability set computations may be equivalently formalized ``semantically'' in terms of control-flow path languages~\cite{Leroux:2005:ATVA} or ``syntactically'' in terms of control-flow graph unfoldings~\cite{Bardin:2005:ATVA}.  We extend these concepts to the
MFP solution in a generic data-flow analysis framework, and we establish several links between the resulting notions.  It turns out that, for data-flow analysis, the resulting ``syntactic'' notion, based on graph \emph{flattenings}, is more general that the resulting ``semantic'' notion, based on restricted regular expressions.  We then propose a generic flattening-based semi-algorithm for computing the MFP solution.  This semi-algorithm may be viewed as a generic template for applying acceleration-based techniques to constraint solving.

\smallskip

We then show how to instantiate the generic flattening-based semi-algorithm in order to obtain an efficient constraint solver\footnote{By solver, we mean an algorithm computing the least solution of constraint systems.} for integers, for a rather large class of constraints using addition, (monotonic) multiplication, factorial, or any other \emph{bounded-increasing} function.  The intuition behind our algorithm is the following: we propagate constraints in a breadth-first manner as long as the least solution is not obtained, and variables involved in a ``useful'' propagation are stored in a graph-like structure.  As soon as a cycle appears in this graph, we compute the least solution of the set of constraints corresponding to this cycle.  It turns out that this acceleration-based algorithm always terminates in cubic-time.

\smallskip

As the main result of the paper, we then show how to compute in cubic-time the least solution for interval constraints with full addition and multiplication, and intersection with a constant.  The proof uses a least-solution preserving translation from interval constraints to the class of integer constraints introduced previously.

\paragraph{Related work. }
In \cite{Karr:1976:ACTA}, Karr presented a polynomial-time algorithm that computes the set of all affine relations that hold in a given control location of a (numerical) program.  Recently, the complexity of this algorithm was revisited in \cite{MullerOlm:2004:ICALP} and a fine upper-bound was presented.
For interval constraints with affine transfer functions, the exact least solution may be computed in cubic-time~\cite{Su:2004:TACAS}.  Strategy iteration was proposed in~\cite{CGGMP-CAV05} to speed up Kleene fix-point iteration with better precision than widenings and narrowings, and this approach has been developped in~\cite{GS-ESOP07} for interval constraint solving with full addition, multiplication and intersection.  Strategy iteration may be viewed as an instance of our generic flattening-based semi-algorithm.  The class of interval constraints that we consider in this paper contains the one in~\cite{Su:2004:TACAS} (which does not include interval multiplication) but it is more restrictive than the one in~\cite{GS-ESOP07}.  We are able to maintain the same cubic-time complexity as in~\cite{Su:2004:TACAS}, and it is still an open problem whether interval constraint solving can be performed in polynomial-time for the larger class considered in~\cite{GS-ESOP07}.


\paragraph{Outline. }
The paper is organized as follows.  Section~2 presents our acceleration-based approach to data-flow analysis.  We then focus on interval constraint-based data-flow analysis.  We present in section~3 a cubic-time algorithm for solving a large class of constraints over the integers, and we show in section~4 how to translate interval constraints (with multiplication) into the previous class of integer constraints, hence providing a cubic-time algorithm for interval constraints.  Section~5 presents some ideas for future work.
\onlyforpaperversion{proc}{
Please note that due to space constraints, most proofs are only sketched in this paper.  A long version of the paper with detailed proofs can be obtained from the authors.
}
\onlyforpaperversion{full}{
Please note that most proofs are only sketched in the paper, but detailed proofs are given in appendix.  This paper is the long version of our SAS 2007 paper.
}

%
%
%
%

\section{Acceleration in Data Flow Analysis}
\label{sec:dataflowanalysis}

This section is devoted to the notion of acceleration in the context of data-flow analysis.  Acceleration techniques for (concrete) reachability set computations~\cite{Boigelot:1994:CAV, Boigelot:1997:SAS, Comon:1998:CAV, Finkel:2003:IC, Finkel:2002:FSTTCS, Leroux:2005:ATVA, Bardin:2005:ATVA} may be equivalently formulated in terms of control-flow path languages or control-flow graph unfoldings.  We shall observe that this equivalence does not hold anymore when these notions are lifted to data-flow analysis.  All results in this section can easily be derived from the definitions, and they are thus presented without proofs.

\subsection{Lattices, words and graphs} 

We respectively denote by $\Nat$ and $\Z$ the usual sets of nonnegative integers  and integers.  For any set $S$, we write $\pow{S}$ for the set of subsets of $S$.  The \emph{identity} function over $S$ is written $\Id_{S}$, and shortly $\Id$ when the set $S$ is clear from the context.

\medskip

Recall that a \emph{complete lattice} is any partially ordered set $(A, \sqsubseteq)$ such that every subset $X \subseteq A$ has a \emph{least upper bound} $\bigsqcup X$ and a \emph{greatest lower bound} $\bigsqcap X$.  The \emph{supremum} $\bigsqcup A$ and the \emph{infimum} $\bigsqcap A$ are respectively denoted by $\top$ and $\bot$.  A function $f \in A \rightarrow A$ is \emph{monotonic}
if $f(x) \sqsubseteq f(y)$ for all $x \sqsubseteq y$ in $A$.
Recall that from Knaster-Tarski's Fix-point Theorem, any monotonic function $f \in A \rightarrow A$ has a \emph{least fix-point} given by $\bigsqcap \, \{a \in A \st f(a) \sqsubseteq a\}$.  For any monotonic function $f \in A \rightarrow A$, we denote by $f^*$ the monotonic function in $A \rightarrow A$ defined by $f^*(x) = \bigsqcap \, \{a \in A \st (x \sqcup f(a)) \sqsubseteq a\}$, in other words $f^*(x)$ is the least post-fix-point of $f$ greater than $x$.

\smallskip

For any complete lattice $(A, \sqsubseteq)$ and any set $S$, we also denote by $\sqsubseteq$ the partial order on $S \rightarrow A$ defined as the point-wise extension of $\sqsubseteq$, i.e. $f \sqsubseteq g$ iff $f(x) \sqsubseteq g(x)$ for all $x \in S$.  The partially ordered set $(S \rightarrow A, \sqsubseteq)$ is also a complete lattice, with lub $\bigsqcup$ and glb $\bigsqcap$ satisfying $(\bigsqcup F)(s) = \bigsqcup \, \{f(s) \st f \in F\}$ and $(\bigsqcap F)(s) = \bigsqcap \, \{f(s) \st f \in F\}$ for any subset $F \subseteq S \rightarrow A$.
Given any integer $n \geq 0$, we denote by $A^n$ the set of $n$-tuples over $A$.  We identify $A^n$ with the set $\{1, \ldots, n\} \rightarrow A$, and therefore $A^n$ equipped with the point-wise extension of $\sqsubseteq$ also forms a complete lattice.

\medskip

Let $\Sigma$ be an \emph{alphabet} (a finite set of \emph{letters}).  We write $\Sigma^*$ for the set of all (finite) \emph{words} $l_0 \cdots l_n$ over $\Sigma$, and $\varepsilon$ denotes the empty word.  Given any two words $x$ and $y$, we denote by $x \cdot y$ (shortly written $x y$) their \emph{concatenation}.   A subset of $\Sigma^*$ is called a \emph{language}.

\medskip

A (directed) \emph{graph} is any pair $G = (V, \rightarrow)$ where $V$ is a set of \emph{vertices} and $\rightarrow$ is a binary relation over $V$.  A pair $(v, v')$ in $\rightarrow$ is called an \emph{edge}.  A (finite) \emph{path} in $G$ is any (non-empty) sequence $v_0, \ldots, v_k$ of vertices, also written $v_0 \rightarrow v_1 \cdots v_{k-1} \rightarrow v_k$, such that $v_{i-1} \rightarrow v_i$ for all $1 \leq i \leq k$.  The nonnegative integer $k$ is called the \emph{length} of the path, and the vertices $v_0$ and $v_k$ are respectively called the \emph{source} and \emph{target} of the path.  A \emph{cycle} on a vertex $v$ is any path of non-zero length with source and target equal to $v$.  A cycle with no repeated vertices other than the source and the target is called \emph{elementary}.  We write $\stackrel{*}{\rightarrow}$ for the reflexive-transitive closure of $\rightarrow$.  A \emph{strongly connected component} (shortly \emph{SCC}) in $G$ is any equivalence class for the equivalence relation $\stackrel{*}{\leftrightarrow}$ on $V$ defined by: $v \stackrel{*}{\leftrightarrow} v'$ if $v \stackrel{*}{\rightarrow} v'$ and $v' \stackrel{*}{\rightarrow} v$.  We say that an SCC is \emph{cyclic} when it contains a unique elementary cycle up to cyclic permutation.

\subsection{Programs and data-flow solutions} 

For the rest of this section, we consider a complete lattice $(A, \sqsubseteq)$.  In our setting, a program will represent an instance (for some concrete program) of a data-flow analysis framework over $(A, \sqsubseteq)$.  To simplify the presentation, we will consider programs given as unstructured collections of commands (this is not restrictive as control-flow may be expressed through variables).

\medskip

Formally, assume a finite set $\Var$ of \emph{variables}.  A \emph{command} on $\Var$ is any tuple $\langle X_1, \ldots, X_n ; f ; X \rangle$, also written $X := f(X_1, \ldots, X_n)$, where $n \in \Nat$ is an \emph{arity}, $X_1, \ldots, X_n \in \Var$ are pairwise disjoint \emph{input variables}, $f \in A^n \rightarrow A$ is a monotonic \emph{transfer function}, and $X \in \Var$ is an \emph{output variable}.  Intuitively, a command $X := f(X_1, \ldots, X_n)$ assigns variable $X$ to $f(X_1, \ldots, X_n)$ and lets all other variables untouched.
A \emph{valuation} on $\Var$ is any function $\rho$ in $\Var \rightarrow A$.  The \emph{data-flow semantics} $\dbrkts{c}$ of any command $c = \langle X_1, \ldots, X_n ; f ; X \rangle$ on $\Var$ is the monotonic function in $(\Var \rightarrow A) \rightarrow (\Var \rightarrow A)$ defined by $\dbrkts{c}\!(\rho)(X) = f(\rho(X_1), \ldots, \rho(X_n))$ and $\dbrkts{c}\!(\rho)(Y) = \rho(Y)$ for all $Y \neq X$.

\medskip

A \emph{program} over $(A, \sqsubseteq)$ is any pair $\Prog = (\Var, C)$ where $\Var$ is a finite set of \emph{variables} and $C$ is a finite set of commands on $\Var$.

\begin{example} \label{ex:program}
  Consider the C-style source code given on the left-hand side below, that we want to analyse with the complete lattice $(\Int, \sqsubseteq)$ of intervals of $\Z$.  The corresponding program $\ExaProg$ is depicted graphically on the right-hand side below.
  \begin{center}
    \begin{minipage}{4.75cm}
      \begin{lstlisting}[language=C,frame=none]
  x = 1;
  while (x $\leq$ 100) {
    if (x $\geq$ 50) x = x-3;
    else x = x+2;
  }
      \end{lstlisting}
    \end{minipage}%
    \begin{minipage}{6.5cm}
      \begin{picture}(70,40)(-35,-20)
    \gasset{Nw=7,Nh=7}
    \node(X1)(-30,15){$X_1$}
    \node(X2)(0,15){$X_2$}
    \node(X3)(0,-15){$X_3$}
    \node(X4)(30,15){$X_5$}

    \node[Nmr=0.0](c0)(-30,0){$c_0$}
    \node[Nmr=0.0](c1)(-15,15){$c_1$}
    \node[Nmr=0.0](c2)(-15,0){$c_2$}
    \node[Nmr=0.0](c3)(5,0){$c_3$}
    \node[Nmr=0.0](c4)(20,0){$c_4$}
    \node[Nmr=0.0](c5)(15,15){$c_5$}

    \drawedge(c0,X1){}
    \drawedge(X1,c1){}
    \drawedge(c1,X2){}
    \drawedge[curvedepth=-2.0](X2,c2){}
    \drawedge[curvedepth=-2.0](c2,X3){}
    \drawedge[curvedepth=-2.0](X3,c3){}
    \drawedge[curvedepth=-2.0](X3,c4){}
    \drawedge[curvedepth=-2.0](c3,X2){}
    \drawedge[curvedepth=-2.0](c4,X2){}
    \drawedge(X2,c5){}
    \drawedge(c5,X4){}
      \end{picture}
    \end{minipage}%
  \end{center}
  Formally, the set of variables of $\ExaProg$ is $\{X_1, X_2, X_3, X_5\}$, representing the value of the variable $x$ at program points 1, 2, 3 and 5.  The set of commands of $\ExaProg$ is $\{c_0, c_1, c_2, c_3, c_4, c_5\}$, with:
  $$
  \begin{array}{rcl@{\qquad\quad}rcl}
    c_0 & : \ & X_1 := \top                           & c_3 & : \ & X_2 := (X_3 \,\sqcap\, [50, +\infty]) - \{3\} \\
    c_1 & : \ & X_2 := (\{0\} \,.\, X_1) + \{1\}      & c_4 & : \ & X_2 := (X_3 \,\sqcap\, ]-\infty, 49]) + \{2\} \\
    c_2 & : \ & X_3 := X_2 \,\sqcap\, ]-\infty, 100]  & c_5 & : \ & X_5 := X_2 \,\sqcap\, [101, +\infty[
  \end{array}
  $$
\end{example}

We will use language-theoretic terminology and notations for traces in a program.  A \emph{trace} in $\Prog$ is any word $c_1 \cdots c_k$ over $C$.  The empty word $\varepsilon$ denotes the empty trace and $C^*$ denotes the set of all traces in $\Prog$.  The data-flow semantics is extended to traces in the obvious way: $\dbrkts{\varepsilon} = \Id$ and $\dbrkts{c \cdot \sigma} = \dbrkts{\sigma} \circ \dbrkts{c}$.  Observe that $\dbrkts{\sigma \cdot \sigma'} = \dbrkts{\sigma'} \circ \dbrkts{\sigma}$ for every $\sigma, \sigma' \in C^*$.  We also extend the data-flow semantics to sets of traces by $\dbrkts{L} = \bigsqcup_{\sigma \in L} \dbrkts{\sigma}$ for every $L \subseteq C^*$.  Observe that $\dbrkts{L}$ is a monotonic function in $(\Var \rightarrow A) \rightarrow (\Var \rightarrow A)$, and moreover $\dbrkts{L_1 \cup L_2} = \dbrkts{L_1} \sqcup \dbrkts{L_2}$ for every $L_1, L_2 \subseteq C^*$.

\medskip

Given a program $\Prog = (\Var, C)$ over $(A, \sqsubseteq)$,
the \emph{minimum fix-point solution} (\emph{MFP-solution)} of $\Prog$, written $\lst{\Prog}$, is the valuation defined as follows:
\begin{eqnarray*}
  \lst{\Prog} & \ = \ & \bigsqcap \ \left\{ \rho \in \Var \rightarrow A \st \dbrkts{c}\!(\rho) \sqsubseteq \rho \mbox{ for all } c \in C \right\}
\end{eqnarray*}

\begin{example} \label{ex:program-solutions}
  The MFP-solution of the program $\ExaProg$ from Example~\ref{ex:program} is the valuation:
  $$
  \lst{\ExaProg} \ = \ \{X_1 \mapsto \top, \ X_2 \mapsto [1, 51], \ X_3 \mapsto [1, 51], \ X_5 \mapsto \bot\}
  $$
\end{example}

%
Recall that we denote by $\dbrkts{C}^*\!(\rho)$ the least post-fix-point of $\dbrkts{C}$ greater than $\rho$.  Therefore it follows from the definitions that $\lst{\Prog} = \dbrkts{C}^*\!(\bot)$.  In our framework, the \emph{merge over all paths solution} (\emph{MOP-solution)} may be defined as the valuation $\dbrkts{C^*}\!(\bot)$, and the following proposition recalls well-known links between the MOP-solution, the MFP-solution and the ascending Kleene chain.

\begin{proposition} \label{prop:mop-mfp-and-kleene}
  For any program $\Prog = (\Var, C)$ over a complete lattice $(A, \sqsubseteq)$, we have:
  $$
  \dbrkts{C^*}\!(\bot) \quad \sqsubseteq \quad \bigsqcup_{k \in \Nat} \ \dbrkts{C}^k\!(\bot) \quad \sqsubseteq \quad \dbrkts{C}^*\!(\bot) \quad = \quad \lst{\Prog}
  $$
\end{proposition}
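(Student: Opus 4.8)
My plan is to verify the three comparisons separately, working from right to left. The rightmost equality $\dbrkts{C}^*\!(\bot) = \lst{\Prog}$ is essentially definitional, as already remarked just before the statement. Since $\dbrkts{C} = \bigsqcup_{c \in C} \dbrkts{c}$ and the least upper bound on the function lattice is computed point-wise, a valuation $\rho$ satisfies $\dbrkts{C}\!(\rho) \sqsubseteq \rho$ if and only if $\dbrkts{c}\!(\rho) \sqsubseteq \rho$ for every $c \in C$. Hence the two defining sets coincide: unfolding $\dbrkts{C}^*\!(\bot) = \bigsqcap \{a \st (\bot \sqcup \dbrkts{C}\!(a)) \sqsubseteq a\} = \bigsqcap \{a \st \dbrkts{C}\!(a) \sqsubseteq a\}$ gives exactly the set defining $\lst{\Prog}$.

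For the middle inequality I would use the standard Kleene-chain argument. I claim $\dbrkts{C}^k\!(\bot) \sqsubseteq \dbrkts{C}^*\!(\bot)$ for every $k \in \Nat$, by induction on $k$. The base case $k = 0$ is just $\bot \sqsubseteq \dbrkts{C}^*\!(\bot)$. For the step, monotonicity of $\dbrkts{C}$ together with the induction hypothesis gives $\dbrkts{C}^{k+1}\!(\bot) \sqsubseteq \dbrkts{C}(\dbrkts{C}^*\!(\bot))$, and since $\dbrkts{C}^*\!(\bot)$ is a post-fix-point of $\dbrkts{C}$ (indeed the least fix-point, by Knaster--Tarski) we have $\dbrkts{C}(\dbrkts{C}^*\!(\bot)) \sqsubseteq \dbrkts{C}^*\!(\bot)$. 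Taking the least upper bound over all $k$ then yields $\bigsqcup_{k} \dbrkts{C}^k\!(\bot) \sqsubseteq \dbrkts{C}^*\!(\bot)$.

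The leftmost inequality carries the real content. Writing $\dbrkts{C^*}\!(\bot) = \bigsqcup_{\sigma \in C^*} \dbrkts{\sigma}\!(\bot)$, it suffices to show $\dbrkts{\sigma}\!(\bot) \sqsubseteq \dbrkts{C}^{|\sigma|}\!(\bot)$ for every trace $\sigma$, which I would prove by induction on the length of $\sigma$. The empty trace gives $\dbrkts{\varepsilon}\!(\bot) = \bot = \dbrkts{C}^0\!(\bot)$. For a trace of the form $\sigma' \cdot c$ with $|\sigma'| = k$, we have $\dbrkts{\sigma' \cdot c}\!(\bot) = \dbrkts{c}(\dbrkts{\sigma'}\!(\bot))$; applying the induction hypothesis and monotonicity of $\dbrkts{c}$ gives $\dbrkts{c}(\dbrkts{\sigma'}\!(\bot)) \sqsubseteq \dbrkts{c}(\dbrkts{C}^k\!(\bot))$, and then $\dbrkts{c} \sqsubseteq \dbrkts{C} = \bigsqcup_{c' \in C} \dbrkts{c'}$ (point-wise) gives $\dbrkts{c}(\dbrkts{C}^k\!(\bot)) \sqsubseteq \dbrkts{C}^{k+1}\!(\bot)$. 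Taking the least upper bound over all $\sigma$ and grouping by length then yields the claim.

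I expect this last step to be where the only genuine subtlety lies. Because transfer functions are assumed merely monotonic and not continuous, one cannot in general upgrade the inequality $\dbrkts{L_1 \cdot L_2} \sqsubseteq \dbrkts{L_2} \circ \dbrkts{L_1}$ to an equality, and this is precisely why the first comparison is an inequality that may be strict (the MOP-solution can lie strictly below the Kleene limit). The induction above sidesteps the issue by only ever pushing a single letter $\dbrkts{c}$ through a single application of monotonicity, never needing to distribute a transfer function over a least upper bound.
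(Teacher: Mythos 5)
Your proof is correct in all three parts: the identification of $\dbrkts{C}^*\!(\bot)$ with $\lst{\Prog}$ via the equivalence $\dbrkts{C}\!(\rho) \sqsubseteq \rho \Leftrightarrow \dbrkts{c}\!(\rho) \sqsubseteq \rho$ for all $c \in C$, the induction showing each Kleene iterate lies below the least post-fix-point, and the length-induction bounding $\dbrkts{\sigma}\!(\bot)$ by $\dbrkts{C}^{|\sigma|}\!(\bot)$ are all sound, and your closing remark correctly identifies non-continuity as the reason the leftmost inequality can be strict. The paper offers no proof of this proposition (Section~2 states that all its results follow easily from the definitions), and your argument is exactly the routine derivation the authors leave to the reader, so there is nothing to compare beyond noting that you have filled the gap correctly.
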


\subsection{Accelerability and flattening} 

We now extend notions from accelerated symbolic verification to this data-flow analysis framework.  Acceleration in symbolic verification was first introduced semantically, in the form of \emph{meta-transitions}~\cite{Boigelot:1994:CAV, Boigelot:1997:SAS}, which basically simulate the effect of taking a given control-flow loop arbitrarily many times.  This leads us to the following proposition and definition.


\begin{proposition}
  Let $\Prog = (\Var, C)$ denote a program over $(A, \sqsubseteq)$.  For any languages $L_1, \ldots, L_k \subseteq C^*$, we have $(\dbrkts{L_k}^* \circ \cdots \circ \dbrkts{L_1}^*)(\bot) \ \sqsubseteq \ \lst{\Prog}$.
\end{proposition}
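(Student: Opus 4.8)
The plan is to exploit the identity $\lst{\Prog} = \dbrkts{C}^*\!(\bot)$ recorded just above the statement. Writing $\rho^\star := \lst{\Prog}$ for brevity, this identity says that $\rho^\star$ is the \emph{least} post-fix-point of $\dbrkts{C}$ above $\bot$; in particular $\rho^\star$ is itself a post-fix-point, i.e.\ $\dbrkts{c}(\rho^\star) \sqsubseteq \rho^\star$ for every command $c \in C$. The entire argument rests on a single key lemma: $\rho^\star$ is also a post-fix-point of $\dbrkts{L}$ for \emph{every} language $L \subseteq C^*$, that is, $\dbrkts{L}(\rho^\star) \sqsubseteq \rho^\star$.

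To establish this lemma I would first show that $\dbrkts{\sigma}(\rho^\star) \sqsubseteq \rho^\star$ for every individual trace $\sigma \in C^*$, by induction on the length of $\sigma$. The base case $\sigma = \varepsilon$ is immediate since $\dbrkts{\varepsilon} = \Id$. For the inductive step, writing $\sigma = \tau \cdot c$ and using the composition rule $\dbrkts{\tau \cdot c} = \dbrkts{c} \circ \dbrkts{\tau}$, monotonicity of $\dbrkts{c}$ together with the induction hypothesis gives $\dbrkts{c}(\dbrkts{\tau}(\rho^\star)) \sqsubseteq \dbrkts{c}(\rho^\star) \sqsubseteq \rho^\star$. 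Taking the least upper bound over all $\sigma \in L$ then yields $\dbrkts{L}(\rho^\star) = \bigsqcup_{\sigma \in L} \dbrkts{\sigma}(\rho^\star) \sqsubseteq \rho^\star$, which is exactly the lemma.

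Next I would record the elementary consequence of the definition of $f^*$ as the least post-fix-point above its argument: since $\dbrkts{L_i}^*(x)$ is the least post-fix-point of $\dbrkts{L_i}$ greater than $x$, whenever $x \sqsubseteq \rho^\star$ and $\rho^\star$ is a post-fix-point of $\dbrkts{L_i}$ we obtain $\dbrkts{L_i}^*(x) \sqsubseteq \rho^\star$. Combining this with the lemma (so that $\rho^\star$ is a post-fix-point of each $\dbrkts{L_i}$), a straightforward induction on $i$ shows that $(\dbrkts{L_i}^* \circ \cdots \circ \dbrkts{L_1}^*)(\bot) \sqsubseteq \rho^\star$: the base case uses $\bot \sqsubseteq \rho^\star$, and each further application of some $\dbrkts{L_{i+1}}^*$ preserves the bound by the consequence just stated. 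Setting $i = k$ gives precisely $(\dbrkts{L_k}^* \circ \cdots \circ \dbrkts{L_1}^*)(\bot) \sqsubseteq \lst{\Prog}$.

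As for difficulty, there is no genuine obstacle here: the argument is a routine post-fix-point chase. The only point requiring a little care is the trace induction in the lemma, where one must respect the reversed-composition convention $\dbrkts{\sigma \cdot \sigma'} = \dbrkts{\sigma'} \circ \dbrkts{\sigma}$ so that the monotonicity step is applied to the correct factor. Everything else follows immediately from monotonicity of transfer functions, the distribution $\dbrkts{L} = \bigsqcup_{\sigma \in L} \dbrkts{\sigma}$ of the semantics over languages, and the defining universal property of $f^*$.
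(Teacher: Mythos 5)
Your proof is correct. The paper gives no proof of this proposition at all (every result in Section~2 is stated as ``easily derived from the definitions''), and your argument --- showing by induction on trace length that $\lst{\Prog}$ is a common post-fix-point of $\dbrkts{L}$ for every $L \subseteq C^*$, and then propagating the bound through $\dbrkts{L_1}^*, \ldots, \dbrkts{L_k}^*$ via the least-post-fix-point property of $f^*$ --- is precisely the routine derivation the authors intend, so there is nothing to compare against and no gap to report.
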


\begin{definition}
  A program $\Prog = (\Var, C)$ over a complete lattice $(A, \sqsubseteq)$ is called
    \emph{MFP-accelerable}
if
    $\lst{\Prog}   =   (\dbrkts{\sigma_k}^* \circ \cdots \circ \dbrkts{\sigma_1}^*)(\bot)$
  for some words $\sigma_1, \ldots, \sigma_k \in C^*$.
\end{definition}

The following proposition shows that any program $\Prog$ for which the ascending Kleene chain stabilizes after finitely many steps is MFP-accelerable.
\begin{proposition}
  Let $\Prog = (\Var, C)$ denote a program over $(A, \sqsubseteq)$.  If we have $\dbrkts{C}^k\!(\bot) = \lst{\Prog}$ for some $k \in \Nat$, then $\Prog$ is MFP-accelerable.
\end{proposition}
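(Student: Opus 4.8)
The plan is to exploit the two defining properties of the star operator: for any monotonic $f$ and any $x$, the value $f^*(x)$ is the least post-fixpoint of $f$ above $x$, so $f^*(x) \sqsupseteq x$ and (since $f^*(x) = x \sqcup f(f^*(x)) \sqsupseteq f(f^*(x)) \sqsupseteq f(x)$) also $f^*(x) \sqsupseteq f(x)$. In particular $f^* \sqsupseteq \Id$ as functions, hence any composition of such stars is again $\sqsupseteq \Id$. Since the preceding proposition already yields the inequality $(\dbrkts{\sigma_m}^* \circ \cdots \circ \dbrkts{\sigma_1}^*)(\bot) \sqsubseteq \lst{\Prog}$ for any choice of words, I only need to exhibit a composition of command-stars whose value at $\bot$ is $\sqsupseteq \lst{\Prog}$, and equality will follow.

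Writing $C = \{c_1, \ldots, c_n\}$, I would set $h = \dbrkts{c_n}^* \circ \cdots \circ \dbrkts{c_1}^*$ and establish the key pointwise inequality $h \sqsupseteq \dbrkts{C}$. Fixing a valuation $\rho$ and letting $h_j = \dbrkts{c_j}^* \circ \cdots \circ \dbrkts{c_1}^*$ denote the partial composition (with $h_0 = \Id$), each factor is $\sqsupseteq \Id$, so $h_{j-1}(\rho) \sqsupseteq \rho$ and therefore $h_j(\rho) = \dbrkts{c_j}^*(h_{j-1}(\rho)) \sqsupseteq \dbrkts{c_j}^*(\rho) \sqsupseteq \dbrkts{c_j}(\rho)$ by monotonicity of the star. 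Because the remaining outer factors of $h$ are again $\sqsupseteq \Id$, we get $h(\rho) \sqsupseteq h_j(\rho) \sqsupseteq \dbrkts{c_j}(\rho)$ for every $j$; taking the join over $j$ and recalling $\dbrkts{C} = \bigsqcup_{c \in C} \dbrkts{c}$ gives $h(\rho) \sqsupseteq \dbrkts{C}(\rho)$.

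A routine induction then propagates the inequality through iteration: as $h$ and $\dbrkts{C}$ are monotonic and $h \sqsupseteq \dbrkts{C}$, one checks $h^{i+1} = h \circ h^i \sqsupseteq \dbrkts{C} \circ h^i \sqsupseteq \dbrkts{C} \circ \dbrkts{C}^i = \dbrkts{C}^{i+1}$, so $h^k \sqsupseteq \dbrkts{C}^k$, whence $h^k(\bot) \sqsupseteq \dbrkts{C}^k(\bot) = \lst{\Prog}$ by hypothesis. Finally I observe that $h^k$ is literally of the form $\dbrkts{\sigma_m}^* \circ \cdots \circ \dbrkts{\sigma_1}^*$, taking for $\sigma_1, \ldots, \sigma_m$ the sequence $c_1, \ldots, c_n$ repeated $k$ times (so $m = kn$ and each $\sigma_i \in C \subseteq C^*$). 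Combining $h^k(\bot) \sqsupseteq \lst{\Prog}$ with the reverse inequality from the preceding proposition yields $h^k(\bot) = \lst{\Prog}$, i.e. $\Prog$ is MFP-accelerable.

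The only delicate point I anticipate is the bookkeeping in the key inequality $h \sqsupseteq \dbrkts{C}$: one must be careful that inserting the additional identity-dominating star factors never lowers the value, so that the full composition simultaneously dominates \emph{every} individual command $\dbrkts{c_j}$ and hence their join $\dbrkts{C}$. Once this is in place, the remaining steps are pure monotonicity and the reuse of the earlier proposition.
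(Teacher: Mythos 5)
Your proof is correct. The paper states this proposition without proof (all results in Section~2 are presented as immediate consequences of the definitions), and your argument is exactly the intended elementary derivation: the composition $h = \dbrkts{c_n}^* \circ \cdots \circ \dbrkts{c_1}^*$ of single-command stars dominates $\dbrkts{C}$ pointwise (each star factor being above $\Id$ and above the command it stars), so $h^k(\bot) \sqsupseteq \dbrkts{C}^k\!(\bot) = \lst{\Prog}$, while the reverse inequality $h^k(\bot) \sqsubseteq \lst{\Prog}$ is supplied by the preceding proposition applied to the singleton languages $\{c_j\}$.
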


Acceleration in symbolic verification was later expressed syntactically, in terms of flat graph unfoldings.  When lifted to data-flow analysis, this leads to a more general concept than accelerability,
and we will show that these two notions coincide for ``concrete'' programs (as in symbolic verification).  We say that a program $\Prog$ is \emph{single-input} if the arity of every command in $\Prog$ is at most $1$.

\medskip

Given a program $\Prog = (\Var, C)$ over $(A, \sqsubseteq)$, an \emph{unfolding} of $\Prog$ is any pair $(\Prog', \varren)$ where $\Prog' = (\Var', C')$ is a program and $\varren \in \Var' \rightarrow \Var$ is a variable \emph{renaming}, and such that $\langle \varren(X'_1), \ldots, \varren(X'_n) ; f ; \varren(X') \rangle$ is a command in $C$ for every command $\langle X'_1, \ldots, X'_n ; f ; X' \rangle$ in $C'$.  The renaming $\varren$ induces a Galois surjection $(\Var' \rightarrow A, \sqsubseteq) \galois{\valren}{\backvalren} (\Var \rightarrow A, \sqsubseteq)$ where $\backvalren$ and $\valren$ are defined as expected by $\displaystyle \backvalren(\rho) = \rho \circ \kappa$ and $\displaystyle \valren(\rho')(X) = \bigsqcup_{\varren(X') = X} \rho'(X')$.

\medskip

We associate a bipartite graph to any program in a natural way: vertices are either variables or commands, and edges denote input and output variables of commands.  Formally, given a program $\Prog = (\Var, C)$, the \emph{program graph} of $\Prog$ is the labeled graph $G_{\Prog}$ where $\Var \cup C$ is the set of vertices and with edges $(c, X)$ and $(X_i, c)$ for every command $c = \langle X_1, \ldots, X_n ; f ; X \rangle$ in $C$ and $1 \leq i \leq n$.  We say that $\Prog$ is \emph{flat} if there is no SCC in $G_{\Prog}$ containing two distinct commands with the same output variable.  A \emph{flattening} of $\Prog$ is any unfolding $(\Prog', \varren)$ of $\Prog$ such that $\Prog'$ is flat.

\begin{example}
  A flattening of the program $\ExaProg$ from Example~\ref{ex:program} is given below.  Intuitively, this flattening represents a possible unrolling of the while-loop where the two branches of the inner conditional alternate.
  \begin{center}
    \begin{picture}(110,40)(-55,-20)
    \gasset{Nw=7,Nh=7}
    \node(X1)(-50,15){$X_1$}
    \node(X2)(0,15){$X_2$}
    \node(X'2)(0,-15){$X'_2$}
    \node(X3)(-30,-15){$X_3$}
    \node(X'3)(30,-15){$X'_3$}
    \node(X2)(0,15){$X_2$}
    \node(X4)(50,15){$X_5$}

    \node[Nmr=0.0](c0)(-50,0){$c_0$}
    \node[Nmr=0.0](c1)(-25,15){$c_1$}
    \node[Nmr=0.0](c2)(-20,0){$c_2$}
    \node[Nmr=0.0](c4)(20,0){$c_4$}
    \node[Nmr=0.0](c5)(25,15){$c_5$}
    \node[Nmr=0.0](c'2)(15,-15){$c'_2$}
    \node[Nmr=0.0](c3)(-15,-15){$c_3$}

    \drawedge(c0,X1){}
    \drawedge(X1,c1){}
    \drawedge(c1,X2){}
    \drawedge[curvedepth=-2.0](X2,c2){}
    \drawedge(c2,X3){}
    \drawedge(X3,c3){}
    \drawedge(c3,X'2){}
    \drawedge(X'2,c'2){}
    \drawedge(c'2,X'3){}
    \drawedge(X'3,c4){}
    \drawedge[curvedepth=-2.0](c4,X2){}
    \drawedge(X2,c5){}
    \drawedge(c5,X4){}
      \end{picture}
  \end{center}
\end{example}

\begin{lemma} \label{lemma:fix-point-under-approximation}
  Let $\Prog = (\Var, C)$ denote a program over $(A, \sqsubseteq)$.  For any unfolding $(\Prog', \varren)$ of $\Prog$, with $\Prog' = (\Var', C')$, we have $\valren \circ \dbrkts{C'}^* \circ \backvalren \ \sqsubseteq \ \dbrkts{C}^*$.
\end{lemma}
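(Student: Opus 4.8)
The plan is to fix an arbitrary valuation $\rho \in \Var \rightarrow A$ and prove the pointwise inequality $\valren(\dbrkts{C'}^*(\backvalren(\rho))) \sqsubseteq \dbrkts{C}^*(\rho)$, exploiting that $\dbrkts{C'}^*(\backvalren(\rho))$ is by definition the \emph{least} post-fix-point of $\dbrkts{C'}$ above $\backvalren(\rho)$. Writing $\mu = \dbrkts{C}^*(\rho)$, the whole argument then reduces to exhibiting one concrete post-fix-point of $\dbrkts{C'}$ above $\backvalren(\rho)$ that $\valren$ sends below $\mu$; the natural candidate is $\backvalren(\mu)$ itself.

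First I would record the two facts about the Galois surjection that I need: $\backvalren$ is monotonic, and $\valren \circ \backvalren \sqsubseteq \Id$ (the standard lower-adjoint-after-upper-adjoint inequality of any Galois connection, here in fact an equality since $\varren$ is onto). Since $\mu = \dbrkts{C}^*(\rho) \sqsupseteq \rho$, monotonicity of $\backvalren$ gives $\backvalren(\mu) \sqsupseteq \backvalren(\rho)$, so the candidate sits above the required base point $\backvalren(\rho)$.

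The heart of the proof is checking that $\backvalren(\mu)$ is a post-fix-point of $\dbrkts{C'}$, i.e. $\dbrkts{c'}(\backvalren(\mu)) \sqsubseteq \backvalren(\mu)$ for every command $c' = \langle X_1', \ldots, X_n' ; f ; X' \rangle$ of $C'$. On every coordinate other than the output $X'$ the two sides coincide, so only the output coordinate matters, where the required inequality reads $f(\mu(\varren(X_1')), \ldots, \mu(\varren(X_n'))) \sqsubseteq \mu(\varren(X'))$. This is exactly the constraint that the renamed command $\langle \varren(X_1'), \ldots, \varren(X_n') ; f ; \varren(X') \rangle$ imposes on a solution, and the unfolding condition guarantees this renamed command to lie in $C$; since $\mu$ is a post-fix-point of $\dbrkts{C}$, it satisfies that constraint. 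Taking the join over $c' \in C'$ yields $\dbrkts{C'}(\backvalren(\mu)) \sqsubseteq \backvalren(\mu)$. Leastness then gives $\dbrkts{C'}^*(\backvalren(\rho)) \sqsubseteq \backvalren(\mu)$, and applying the monotonic $\valren$ together with $\valren \circ \backvalren \sqsubseteq \Id$ finishes: $\valren(\dbrkts{C'}^*(\backvalren(\rho))) \sqsubseteq \valren(\backvalren(\mu)) \sqsubseteq \mu = \dbrkts{C}^*(\rho)$.

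The one subtlety I would flag is the temptation to prove the seemingly cleaner functional commutation $\dbrkts{C'} \circ \backvalren \sqsubseteq \backvalren \circ \dbrkts{C}$ and derive the lemma from a generic fix-point transfer theorem. That commutation can actually fail, because $\dbrkts{C}$ need not dominate the identity on an output variable (when \emph{every} command of $C$ writes that variable), so the ``unchanged-coordinate'' contributions of $\dbrkts{C'}$ are not obviously bounded by the right-hand side. Working directly at the level of post-fix-points sidesteps this entirely, since there only the output coordinate of each command has to be controlled, and that is precisely the coordinate at which the inherited $C$-constraint applies. This command-by-command post-fix-point verification is the step I expect to be the crux; everything else is monotonicity plus the adjunction inequality.
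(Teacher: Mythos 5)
Your proof is correct, and there is in fact nothing in the paper to compare it against: the lemma sits in Section~2, whose results are explicitly presented without proofs as ``easily derived from the definitions,'' and your argument --- checking command-by-command, via the unfolding condition, that $\backvalren(\dbrkts{C}^*\!(\rho))$ is a post-fix-point of $\dbrkts{C'}$ above $\backvalren(\rho)$, then invoking leastness of $\dbrkts{C'}^*\!(\backvalren(\rho))$ and the inequality $\valren \circ \backvalren \sqsubseteq \Id$ --- is precisely the routine derivation the authors intended. Your closing caveat is also accurate: the stronger functional commutation $\dbrkts{C'} \circ \backvalren \sqsubseteq \backvalren \circ \dbrkts{C}$ can indeed fail (e.g.\ when every command of $C$ writes the same variable, the identity behaviour of $\dbrkts{c'}$ on non-output copies need not be dominated), so working directly with post-fix-points rather than a generic fix-point transfer is the right choice.
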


\begin{proposition}
  Let $\Prog = (\Var, C)$ denote a program over $(A, \sqsubseteq)$.  For any unfolding $(\Prog', \varren)$ of $\Prog$, we have $\valren(\lst{\Prog'}) \ \sqsubseteq \ \lst{\Prog}$.
\end{proposition}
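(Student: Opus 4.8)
The plan is to obtain the inequality as an immediate consequence of Lemma~\ref{lemma:fix-point-under-approximation}, evaluated at the least valuation. Recall that the excerpt established $\lst{\Prog} = \dbrkts{C}^*\!(\bot)$; the very same identity, applied to the unfolded program, gives $\lst{\Prog'} = \dbrkts{C'}^*\!(\bot)$, where $\bot$ denotes the least valuation of $\Var \rightarrow A$ on the right and the least valuation of $\Var' \rightarrow A$ on the left. So the goal reduces to showing $\valren\!\left(\dbrkts{C'}^*\!(\bot)\right) \sqsubseteq \dbrkts{C}^*\!(\bot)$.

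First I would check that $\backvalren$ sends the least valuation of $\Var \rightarrow A$ to the least valuation of $\Var' \rightarrow A$. This is immediate from the definition $\backvalren(\rho) = \rho \circ \varren$: for every $X' \in \Var'$ we have $\backvalren(\bot)(X') = \bot(\varren(X')) = \bot$, hence $\backvalren(\bot) = \bot$. Intuitively, renaming variables cannot conjure any information out of the empty valuation.

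With this in hand, the result follows by instantiating the functional inequality $\valren \circ \dbrkts{C'}^* \circ \backvalren \ \sqsubseteq \ \dbrkts{C}^*$ of Lemma~\ref{lemma:fix-point-under-approximation} at the argument $\bot \in \Var \rightarrow A$. The left-hand side becomes $\valren\!\left(\dbrkts{C'}^*\!(\backvalren(\bot))\right) = \valren\!\left(\dbrkts{C'}^*\!(\bot)\right) = \valren(\lst{\Prog'})$, using the previous paragraph together with $\lst{\Prog'} = \dbrkts{C'}^*\!(\bot)$; the right-hand side is $\dbrkts{C}^*\!(\bot) = \lst{\Prog}$. Combining these two equalities with the lemma's pointwise inequality yields $\valren(\lst{\Prog'}) \sqsubseteq \lst{\Prog}$, as required.

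As for difficulty, essentially all the work is already packaged inside Lemma~\ref{lemma:fix-point-under-approximation}, so the present proposition carries no real obstacle of its own. The only point deserving a moment's care is the identity $\backvalren(\bot) = \bot$, which is precisely what lets us convert the pointwise statement about functions pre- and post-composed with $\backvalren$ and $\valren$ into a statement purely about the two MFP-solutions; once this is noted, the chain of (in)equalities above is forced.
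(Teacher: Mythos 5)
Your proof is correct and follows exactly the route the paper intends: the paper states this proposition without proof (noting that results in this section follow easily from the definitions), placing it immediately after Lemma~\ref{lemma:fix-point-under-approximation}, and the intended derivation is precisely your instantiation of that lemma at $\bot$ together with the observations that $\backvalren(\bot) = \bot$ and $\lst{\Prog} = \dbrkts{C}^*\!(\bot)$.
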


%
\begin{definition}
  A program $\Prog = (\Var, C)$ over a complete lattice $(A, \sqsubseteq)$ is called
    \emph{MFP-flattable}
if
    $\lst{\Prog}   =   \valren(\lst{\Prog'})$
  for some flattening $(\Prog', \varren)$ of $\Prog$.
\end{definition}

Observe that any flat program is trivially MFP-flattable.
The following proposition establishes
links between accelerability and flattability.  As a corollary to the proposition, we obtain that
MFP-accelerability and MFP-flattability
are
equivalent for single-input
programs.

\begin{proposition}
  The following relationships hold for programs over $(A, \sqsubseteq)$:
  \begin{enumerate}
  \item[$i)$]   MFP-accelerability implies MFP-flattability.
  \item[$ii)$]  MFP-flattability implies MFP-accelerability for single-input programs.
  \end{enumerate}
\end{proposition}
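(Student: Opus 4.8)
The plan is to prove each implication by exhibiting a witness that realizes the \emph{reverse} of an inequality that is already available for free, and then conclude by antisymmetry. For any words $\sigma_1,\ldots,\sigma_k$ the proposition preceding the definition of MFP-accelerability gives $(\dbrkts{\sigma_k}^* \circ \cdots \circ \dbrkts{\sigma_1}^*)(\bot) \sqsubseteq \lst{\Prog}$, and for any unfolding $(\Prog',\varren)$ the proposition preceding the definition of MFP-flattability gives $\valren(\lst{\Prog'}) \sqsubseteq \lst{\Prog}$. So in both parts it suffices to establish $\sqsupseteq$. Throughout I will use $\lst{\Prog} = \dbrkts{C}^*\!(\bot)$ and, for any flat unfolding, $\lst{\Prog'} = \dbrkts{C'}^*\!(\bot)$.

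\textbf{Part $i)$: accelerable $\Rightarrow$ flattable.}
Given words $\sigma_1,\ldots,\sigma_k$ witnessing accelerability, I would build a flattening realizing the bounded expression $\sigma_1^*\cdots\sigma_k^*$. First I introduce $k$ disjoint blocks of fresh variable copies, with $\varren$ sending each copy to its original; in block $i$ I install renamed instances of the commands of $\sigma_i$, laid out in static-single-assignment fashion around one oriented cycle so that (a) every copy is the output of at most one command of the block, making each block a \emph{cyclic} SCC and the whole program flat; (b) each command reads the most recent copy written earlier in the block, or the corresponding copy of block $i-1$ if not yet written inside the block (block $0$ staying at $\bot$), so that traversing block $i$ once composes exactly to $\dbrkts{\sigma_i}$. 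The blocks then form a chain in the program-graph DAG, and solving $\Prog'$ proceeds block by block: the least post-fix-point of block $i$ computes $\dbrkts{\sigma_i}^*$ applied to the valuation exported by block $i-1$. Pushing forward by $\valren$ reconstructs $\rho_i = \dbrkts{\sigma_i}^*(\rho_{i-1})$ stage by stage, whence $\valren(\lst{\Prog'}) = \rho_k = \lst{\Prog}$, giving the desired $\sqsupseteq$.

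\textbf{Part $ii)$: flattable $\Rightarrow$ accelerable for single-input programs.}
Here I fix a flattening $(\Prog',\varren)$ with $\Prog'$ flat; note that $\Prog'$ is single-input, since its commands are renamings of commands of $\Prog$ and renaming preserves arity. Flatness together with single-input forces every nontrivial SCC of $G_{\Prog'}$ to be a simple oriented cycle: inside an SCC each variable has a unique defining command and each command reads a single variable. Thus each such SCC is a cyclic SCC that can be read off as a word $\tau$ over $C'$. Decomposing $\lst{\Prog'} = \dbrkts{C'}^*\!(\bot)$ along the topological order of the SCC-DAG expresses it as $(\dbrkts{\tau_k}^* \circ \cdots \circ \dbrkts{\tau_1}^*)(\bot)$ for cycle-words $\tau_1,\ldots,\tau_k$ over $C'$, trivial SCCs contributing length-one words. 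Setting $\sigma_i = \varren(\tau_i)$ (a letter-wise renaming yielding a genuine word of $C^*$) and pushing the whole composition through the Galois surjection $\valren$, I obtain $\valren(\lst{\Prog'}) = (\dbrkts{\sigma_k}^* \circ \cdots \circ \dbrkts{\sigma_1}^*)(\bot)$; since the left-hand side equals $\lst{\Prog}$ by flattability, $\Prog$ is MFP-accelerable.

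\textbf{Main obstacles.}
The inheritance of single-input, the two free inequalities, the SCC-to-word reading, and the topological-order decomposition of a least fix-point are routine. The real work in $i)$ is verifying simultaneously that the SSA layout is a valid unfolding (each renamed command still maps onto its original, with distinct input copies of the correct originals), that it is flat, and that it is faithful (one traversal composes to $\dbrkts{\sigma_i}$ and the block's least post-fix-point to $\dbrkts{\sigma_i}^*$); the delicate point is the \emph{frame handling}, namely wiring unchanged variables to the previous block rather than copying them, since no command is available to perform a copy. In $ii)$ the crux is the exact commutation $\valren \circ \dbrkts{\tau}^* = \dbrkts{\varren(\tau)}^* \circ \valren$ along the chain: Lemma~\ref{lemma:fix-point-under-approximation} only yields $\sqsubseteq$, and promoting it to equality is precisely where single-input is indispensable, as it guarantees each loop is a simple cycle whose data-flow $\varren$ does not merge. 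I expect this commutation to be the principal difficulty.
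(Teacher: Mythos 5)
Both halves of your argument follow the paper's overall plan (a chain of loop-blocks for $i)$; cyclic SCC traces pushed through $\valren$ for $ii)$), but in each half a concrete step fails.

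In part $i)$, nothing in your construction imports the valuation exported by block $i-1$ into the cycle of block $i$ for variables that are \emph{written} inside block $i$: your clause (b) wires such inputs to copies inside the cycle (this is what closes the cycle), your clause (a) forbids a second command writing into those copies, and block $i-1$ is consulted only for variables never written in block $i$. Hence block $i$'s least post-fix-point is computed above $\bot$, not above the exported valuation, and your claim that it equals $\dbrkts{\sigma_i}^*$ of that valuation is false. Concretely, over $A=\pow{\Z}$ take $\Var=\{X\}$, $c_1: X \sqsupseteq X \sqcup \{1\}$ and $c_2: X \sqsupseteq X+1$ (pointwise shift). With $\sigma_1=c_1$, $\sigma_2=c_2$ this program is MFP-accelerable and $\lst{\Prog}=\{1,2,3,\ldots\}$, but your flattening is two \emph{disconnected} one-command cycles $V \sqsupseteq V \sqcup \{1\}$ and $W \sqsupseteq W+1$, whose least solution is $V=\{1\}$, $W=\emptyset$, so $\valren(\lst{\Prog'})=\{1\}\neq\lst{\Prog}$. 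The repair is what the paper's ``product'' with an automaton recognizing $\sigma_1^*\cdots\sigma_k^*$ supplies: besides the loop at the $i$-th state there are bridge transitions reading $\sigma_i$ from the $(i-1)$-th state into that loop, i.e.\ each command of $\sigma_i$ occurs \emph{twice}, once as an entry command reading block $i-1$ and writing into the cycle's copies, and once as the cycle command; flatness survives because the entry commands lie outside the SCC they write into, but this necessarily violates your stipulation (a). (Your ``frame handling'' remark about unwritten variables is correct and needed; it is the rewritten variables that you left unseeded.)

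In part $ii)$, the commutation you single out as the crux, $\valren \circ \dbrkts{\tau}^* = \dbrkts{\varren(\tau)}^* \circ \valren$, is simply false, even for single-input programs and even when $\tau$ is the cyclic trace of an SCC of a flat unfolding; single-input does not rescue it. Take $A=\pow{\{a,b\}}$, $\Var=\{X\}$, $c_1: X \sqsupseteq f(X)$ with $f$ constant equal to $\{a\}$, and $c_2: X \sqsupseteq g(X)$ with $g(x)=\{b\}$ if $a\in x$ and $g(x)=\emptyset$ otherwise. The unfolding with two copies $\varren(X'_1)=\varren(X'_2)=X$ and the cycle $X'_2 \sqsupseteq f(X'_1)$, $X'_1 \sqsupseteq g(X'_2)$ is flat and single-input, $\lst{\Prog'}=(X'_1\mapsto\{b\},\ X'_2\mapsto\{a\})$, and $\valren(\lst{\Prog'})=\{a,b\}=\lst{\Prog}$, so $\Prog$ is MFP-flattable via this flattening. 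Yet for the cyclic trace $\tau=c'_1c'_2$ the word $\varren(\tau)=c_1c_2$ satisfies $\dbrkts{c_1c_2}(\rho)(X)=g(f(\rho(X)))=\{b\}$ for every $\rho$, so $\dbrkts{\varren(\tau)}^*(\bot)=\{b\}$, while the other rotation gives $\{a\}$; neither equals $\valren(\dbrkts{\tau}^*(\bot))=\{a,b\}$. The failure is structural: the two copies hold the two intermediate values of the cycle simultaneously, whereas in $\Prog$ the single variable $X$ is overwritten at each step; $\valren$ merges copies, and $\dbrkts{\varren(\tau)}$ cannot reproduce that merge. The proposition itself is safe (this $\Prog$ is accelerable via $\sigma_1=c_1$, $\sigma_2=c_2$), but the witness cannot be obtained by letter-wise renaming of the SCC traces; one must insert additional words (images of rotations or prefixes of each cycle, suitably ordered) to harvest the values of all the copies that $\varren$ merges --- which is what the paper's vaguer ``arranging these traces'' and ``backward preservation of accelerability under unfolding'' have to accomplish, and what your stronger commutation claim incorrectly short-circuits.
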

\begin{proof}[Sketch]
  To prove $i)$,
  we use the fact that for every words $\sigma_1, \ldots, \sigma_k \in C^*$, there exists a finite-state automaton $\Aut$ without nested cycles recognizing $\sigma_1^* \cdots \sigma_k^*$.  The ``product'' of any program $\Prog$ with $\Aut$ yields a flattening that ``simulates'' the effect of $\sigma_1^* \cdots \sigma_k^*$ on $\Prog$.  To prove $ii)$,
  we observe that for any flat single-input program $\Prog$, each non-trivial SCC of $G_{\Prog}$ is cyclic.  We pick a ``cyclic'' trace (which is unique up to circular permutation) for each SCC, and we arrange these traces to prove that $\Prog$ is accelerable.  Backward preservation of accelerability under unfolding
  concludes the proof.
  \qed
\end{proof}

\begin{remark}
  For any labeled transition system $\LTS$ with a set $S$ of states, the forward collecting semantics of $\LTS$ may naturally be given as a single-input
  program $\Prog_{\LTS}$ over $(\pow{S}, \subseteq)$.  With respect to this translation (from $\LTS$ to $\Prog_{\LTS}$), the notion of flattability developped for accelerated symbolic verification of labeled transition systems coincide with the notions of MFP-accelerability and MFP-flattability defined above.
\end{remark}

Recall that our main goal is to compute (exact)
MFP-solutions
using accele\-ration-based techniques.  According to the previous propositions, flattening-based computation of the MFP-solution seems to be the most promising approach, and we will focus on this approach for the rest of the paper.

\subsection{Generic flattening-based constraint solving} 

It is well known that the MFP-solution of a program may also be expressed as the least solution of a constraint system, and we will use this formulation for the rest of the paper.  We will use some new terminology to reflect this new formulation, however notations and definitions will remain the same.  A command $\langle X_1, \ldots, X_n ; f ; X \rangle$ will now be called a \emph{constraint}, and will also be written $X \sqsupseteq f(X_1, \ldots, X_n)$.  A program over $(A, \sqsubseteq)$ will now be called a \emph{constraint system} over $(A, \sqsubseteq)$, and the MFP-solution will be called the \emph{least solution}.  Among all acceleration-based notions defined previously, we will only consider MFP-flattability for constraint systems, and hence we will shortly write \emph{flattable} instead of MFP-flattable.

\medskip

Given a constraint system $\CS = (\Var, C)$ over $(A, \sqsubseteq)$, any valuation $\rho \in \Var \rightarrow A$ such that $\rho \sqsubseteq \dbrkts{C}\!(\rho)$ (resp. $\rho \sqsupseteq \dbrkts{C}\!(\rho)$) is called a \emph{pre-solution} (resp. a \emph{post-solution}).  A post-solution is also shortly called a \emph{solution}.  Observe that the least solution $\lst{\CS}$ is the greatest lower bound of all solutions of $C$.

\medskip

We now present a generic flattening-based semi-algorithm for constraint solving.  Intuitively, this semi-algorithm performs a propagation of constraints starting from the valuation $\bot$, but at each step we extract a flat ``subset'' of constraints (possibly by duplicating some variables) and we update the current valuation with the least solution of this flat ``subset'' of constraints.

\medskip
\begin{lstlisting}[emph={Solve}]
  Solve($\CS = (\Var, C)$ : a constraint system)
  $\rho \leftarrow \bot$
  while $\dbrkts{C}\!(\rho) \not\sqsubseteq \rho$
       construct a flattening $(\Prog', \varren)$ of $\Prog$, where $\Prog' = (\Var', C')$
       $\rho'  \leftarrow \rho \circ \varren$
       $\rho'' \leftarrow \dbrkts{C'}^*\!(\rho')$ $\hspace{27mm}$ { $\valren(\rho'') \sqsubseteq \dbrkts{C}^*\!(\rho)$ from Lemma $\mbox{\ref{lemma:fix-point-under-approximation}}$ }
       $\rho   \leftarrow \rho \sqcup \valren(\rho'')$
  return $\rho$
\end{lstlisting}
\medskip

The \textsf{Solve} semi-algorithm may be viewed as a generic template for applying acceleration-based techniques to constraint solving.  The two main challenges are (1) the construction of a suitable flattening at line~4, and (2) the computation of the least solution for flat constraint systems (line~6).  However, assuming that all involved operations are effective, this semi-algorithm is \emph{correct} (i.e. if it terminates then the returned valuation is the least solution of input constraint system), and it is \emph{complete} for flattable constraint systems (i.e. the input constraint system is flattable if and only if there exists choices of flattenings at line~4 such that the while-loop terminates).
We will show in the sequel how to instantiate the \textsf{Solve} semi-algorithm in order to obtain an efficient constraint solver for integers and intervals.





\section{Integer Constraints}
Following \cite{Su:2004:TACAS, GS-ESOP07}, we first investigate integer constraint solving in order to derive in the next section an interval solver. This approach is motivated by the encoding of an interval by two integers.

\medskip

The \emph{complete lattice of integers} $\Zrond=\Z\cup\{-\infty,+\infty\}$ is equipped with the natural order:
$$-\infty< \cdots < -2< -1< 0<1<2<\cdots < +\infty$$
Observe that the least upper bound $x\vee y$ and the greatest lower bound $x\wedge y$ respectively correspond to the maximum and the minimum. Addition and multiplication functions are extended from $\Z$ to $\Zrond$ as in \cite{GS-ESOP07}:
$$\begin{array}{@{}cccccp{0.2cm}cccccp{0.2cm}l@{}}
  x.0         &=& 0.x         &=& 0       &
  &
  x+(-\infty) &=& (-\infty)+x &=& -\infty &   
  &
  \text{for all }x\\

  x.(+\infty) &=& (+\infty).x &=& +\infty &   
  &
  x.(-\infty) &=& (-\infty).x &=& -\infty &
  &
  \text{for all }x>0\\
  
  x.(+\infty) &=& (+\infty).x &=& -\infty &   
  &
  x.(-\infty) &=& (-\infty).x &=& +\infty &
  &
  \text{for all }x<0\\ 

  x+(+\infty) &=& (+\infty)+x &=& +\infty &   
  &
              & &             & &         &
  &
  \text{for all }x>-\infty\\
  
\end{array}$$


A constraint system $\CS=(\X,C)$ is said \emph{cyclic} if the set of constraints $C$ is contained in a cyclic SCC.  An example is given below.

\medskip

\begin{center}
\begin{picture}(99,40)(0,-50)
\gasset{Nw=7,Nh=7}
\node[NLangle=0.0](n0)(56.0,-12.0){$X_0$}

\node[NLangle=0.0,Nmr=0.0](n1)(68.0,-20.0){$c_1$}

\node[NLangle=0.0](n2)(76.0,-28.0){$X_1$}

\node[NLangle=0.0,Nmr=0.0](n3)(76.0,-40.0){$c_2$}

\node(n4)(80.0,-8.0){}

\node(n5)(88.0,-16.0){}

\node(n6)(92.0,-36.0){}

\node(n7)(92.0,-44.0){}

\node[NLangle=0.0](n8)(36.0,-12.0){$X_{i}$}

\node[NLangle=0.0](n9)(24.0,-32.0){$X_{i-1}$}

\node(n10)(16.0,-40.0){}

\node(n11)(20.0,-48.0){}

\node(n12)(8.0,-20.0){}

\node(n13)(8.0,-28.0){}

\drawedge(n0,n1){}

\drawedge(n1,n2){}

\drawedge(n2,n3){}

\node[NLangle=0.0,Nmr=0.0](n14)(24.0,-20.0){$c_{i}$}

\node[NLangle=0.0,Nmr=0.0](n15)(36.0,-40.0){$c_{i-1}$}

\drawedge(n14,n8){}

\drawedge(n15,n9){}

\drawedge(n9,n14){}

\drawedge(n12,n14){}

\drawedge(n13,n14){}

\drawedge(n10,n15){}

\drawedge(n11,n15){}

\drawedge(n4,n1){}

\drawedge(n5,n1){}

\drawedge(n6,n3){}

\drawedge(n7,n3){}

\node[NLangle=0.0](n16)(64.0,-48.0){$X_2$}

\drawedge(n3,n16){}

\node[Nframe=n,NLangle=0.0](n17)(48.0,-48.0){$\ldots$}

\node[Nframe=n,NLangle=0.0](n18)(48.0,-12.0){$\ldots$}

\drawedge(n5,n3){}

\drawedge(n10,n14){}

\end{picture}
\end{center}
Observe that a cyclic constraint system is flat. A \emph{cyclic flattening} $(\CS',\kappa)$ where $\CS'=(\X',C')$ can be naturally \emph{associated} to any cycle $X_0\rightarrow{c_1}\rightarrow X_1\cdots \rightarrow c_n\rightarrow X_n=X_0$ of a constraint system $\CS$, by considering the set $\X'$ of variables obtained from $\X$ by adding $n$ new copies $Z_1,\ldots, Z_n$ of $X_1,\ldots,X_n$ with the corresponding renaming $\kappa$ that extends the identity function over $\X$ by $\kappa(Z_i)=X_i$, and by considering the set of constraints $C'=\{c_1',\ldots,c_n'\}$ where $c_i'$ is obtained from $c_i$ by renaming the output variable $X_i$ by $Z_i$ and by renaming the input variable $X_{i-1}$ by $Z_{i-1}$ where $Z_0=Z_n$.\\

In section \ref{sub:bics}, we introduce an instance of the generic \textsf{Solve} semi-algorithm that solves constraint systems that satisfy a property called \emph{bounded-increasing}. This class of constraint systems is extended in section \ref{sub:ics} with test constraints allowing a natural translation of \emph{interval} constraint systems to contraint systems in this class.

\subsection{Bounded-increasing constraint systems}\label{sub:bics}
A monotonic function $f\in \Zrond^k\rightarrow\Zrond$ is said \emph{bounded-increasing} if for any $x_1<x_2$ such that $f(\bot)<f(x_1)$ and $f(x_2)<f(\top)$ we have $f(x_1)<f(x_2)$. Intuitively $f$ is increasing over the domain of $x\in\Zrond^k$ such that $f(x)\not\in\{f(\bot),f(\top)\}$.

\begin{example}
  The guarded identity $x\mapsto x\wedge b$ where $b\in\Zrond$, the addition $(x,y)\mapsto x+y$, the two multiplication functions $\mulpp$ and $\mulmm$ defined below, the power by two $x\mapsto 2^{x\vee 0}$, the factorial $x\mapsto !(x\vee 1)$ are bounded-increasing. However the minimum and the maximum functions are not bounded-increasing.
  $$
  \mulpp(x,y)=
  \begin{cases}
    x.y & \text{if } x,y\geq 0 \\
    0 & \text{otherwise}\\
  \end{cases}
  \;\;\;\;\;\;
  \mulmm(x,y)=
  \begin{cases}
    -x.y & \text{if } x,y<0 \\
    0 & \text{otherwise}\\
  \end{cases}
  $$
\end{example}

A \emph{bounded-increasing constraint} is a constraint of the form  $X\geq f(X_1,\ldots,X_k)$ where $f$ is a bounded-increasing function. Such a constraint is said \emph{upper-saturated} (resp. \emph{lower-saturated}) by a valuation $\rho$ if $\rho(X)\geq f(\top)$ (resp. $\rho(X)\leq f(\bot)$). Given a constraint system $\CS=(\X,C)$ and a bounded-increasing constraint $c\in C$ upper-saturated by a valuation $\rho_0$, observe that $\crochet{C}^*\!(\rho_0)=\crochet{C'}^*\!(\rho_0)$ where $C'=C\moins\{c\}$. Intuitively, an upper-saturated constraint for $\rho_0$ can be safely removed from a constraint system without modifying the least solution greater than $\rho_0$. The following lemma will be useful to obtain upper-saturated constraints.
\begin{lemma}\label{lem:sunny}
  Let $\CS$ be a cyclic bounded-increasing constraint system. If $\rho_0$ is a pre-solution of $\CS$ that does not lower-saturate any constraint, then either $\rho_0$ is a solution or $\crochet{C}^*\!(\rho_0)$ upper-saturates a constraint.
\end{lemma}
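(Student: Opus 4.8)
The plan is to run the Kleene iteration from $\rho_0$ and argue that, unless $\rho_0$ is already a solution, the values are forced to increase strictly around the cycle until some constraint becomes upper-saturated. I will use throughout that $\crochet{C}^*\!(\rho_0)$ is the least solution above $\rho_0$ and that, since $\rho_0$ is a pre-solution, it is the limit of the nondecreasing chain $\rho_0 \leq \crochet{C}\!(\rho_0) \leq \crochet{C}^2\!(\rho_0) \leq \cdots$.

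First I would set up a reduction to a single cycle of unary functions. Since $\CS$ is cyclic its constraints form one elementary cycle $X_0 \rightarrow c_1 \rightarrow X_1 \cdots \rightarrow c_n \rightarrow X_n = X_0$, where each $c_i$ has output $X_i$ and its only cycle-input is $X_{i-1}$; all remaining inputs of $c_i$ lie outside the cycle and are therefore left untouched by $\crochet{C}$. Freezing those external inputs at their $\rho_0$-values turns $c_i$ into a unary constraint $X_i \geq g_i(X_{i-1})$, and a short check shows that fixing arguments of a bounded-increasing function preserves bounded-increasingness, with $g_i(\bot) \geq f_i(\bot)$ and $g_i(\top) \leq f_i(\top)$. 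Writing $v_i = \crochet{C}^*\!(\rho_0)(X_i)$, minimality of the least solution gives $v_i = \rho_0(X_i) \vee g_i(v_{i-1})$ for every $i$ (indices mod $n$).

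Now assume $\rho_0$ is not a solution and, towards a contradiction, that $\crochet{C}^*\!(\rho_0)$ upper-saturates no constraint, i.e. $v_i < f_i(\top)$ for all $i$; in particular every $v_i$ is finite. The crucial structural fact is a \emph{plateau-freeness} property of bounded-increasing functions over $\Zrond$: a bounded-increasing $g_i$ can fail to be strictly increasing at a finite argument only by being stuck at one of the saturation values $f_i(\bot)$ or $f_i(\top)$, since any value attained in the open interval $(f_i(\bot),f_i(\top))$ at two comparable finite arguments would create an internal plateau forbidden by the definition. Consequently, whenever $g_i$ attains at a finite argument a value strictly between $f_i(\bot)$ and $f_i(\top)$, we get $g_i(\bot)=f_i(\bot)$, $g_i(\top)=f_i(\top)$, and $g_i$ is strictly increasing there — and over the integers strict monotonicity means $g_i(x+1)\geq g_i(x)+1$, so $x\mapsto g_i(x)-x$ is nondecreasing on that range. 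Because $\rho_0$ is not a solution, some $c_{i_0}$ is violated by $\rho_0$, so by minimality $v_{i_0}=g_{i_0}(v_{i_0-1})$ with $v_{i_0}>\rho_0(X_{i_0})>f_{i_0}(\bot)$ (as $\rho_0$ does not lower-saturate $c_{i_0}$); together with $v_{i_0}<f_{i_0}(\top)$ this places $g_{i_0}(v_{i_0-1})$ strictly between the saturation values, so $g_{i_0}$ is in its strictly increasing regime at $v_{i_0-1}$, and likewise for every constraint that is tight and lifted above its $\rho_0$-value.

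The contradiction is then obtained by composing these local strict increases around the loop. Let $h = g_n \circ \cdots \circ g_1$, which is again bounded-increasing (composition preserves the property), and note that $v_0$ is the least fixpoint of $x \mapsto K \vee h(x)$, where $K$ collects the $\rho_0$-values propagated forward along the cycle. The strict increase witnessed at $c_{i_0}$ yields a point $a$ with $h(a)>a$ lying in the active range of $h$; by the integer gap argument $x\mapsto h(x)-x$ is nondecreasing there, so $h(x)>x$ persists for all larger $x$ still below $f(\top)$, and iterating $h$ produces a strictly increasing sequence of integers that cannot stay bounded below the saturation level — forcing some $v_j \geq f_j(\top)$, contrary to assumption. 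I expect the last step to be the main obstacle: faithfully transporting ``$\rho_0$ is not a solution'' into a genuine strict increase of the composite $h$ inside its active range, so that the additive constant $K$ cannot mask it, and then ruling out a finite fixpoint via integrality. The reduction to the single self-map $h$ together with gap-monotonicity over $\Zrond$ is what makes this work, and the no-lower-saturation hypothesis is precisely what guarantees that we start inside — rather than below — the strictly increasing region.
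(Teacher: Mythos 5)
Your route is viable and genuinely different from the paper's, but as written it stops exactly at its load-bearing step: you assert that the violation at $c_{i_0}$ ``yields a point $a$ with $h(a)>a$ in the active range'' and then flag this as an expected obstacle rather than proving it. The good news is that it does go through with the tools you already set up. Take $a=\rho_0(X_0)$. Pre-solution-ness gives $\rho_0(X_j)\leq g_j(\rho_0(X_{j-1}))$ for every $j$, so forward propagation yields $g_j(\cdots g_1(a))\geq \rho_0(X_j)$ for all $j$; at the violated constraint this improves by integrality to $g_{i_0}(\cdots g_1(a))\geq \rho_0(X_{i_0})+1$; and the $+1$ survives each remaining $g_j$ because $g_j$ is in its gap-$\geq 1$ regime at the relevant arguments: those arguments are sandwiched between $\rho_0(X_{j-1})$ and $v_{j-1}$, so the corresponding values are $\geq\rho_0(X_j)>f_j(\bot)$ (pre-solution plus no-lower-saturation) and $\leq v_j<f_j(\top)$ (your contradiction hypothesis, using $v_j=g_j(v_{j-1})$, which itself follows from pre-solution-ness since $\rho_0(X_j)\leq g_j(\rho_0(X_{j-1}))\leq g_j(v_{j-1})$). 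Hence $h(a)\geq a+1$; since $a\leq v_0$ and $h(v_0)\leq v_0$, all iterates $h^k(a)$ stay below the finite $v_0$ while increasing by at least $1$ per step --- contradiction. Your worry about the constant $K$ is a non-issue: once the iterates diverge below $v_0$, $K$ plays no role.

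For comparison, the paper never folds the cycle into a single self-map. Its argument is purely local: if $\rho$ is a pre-solution that is not a solution and lower-saturates nothing, then applying the violated constraint $c_i$ gives a strictly larger pre-solution $\rho'$ that either upper-saturates some constraint (and then so does $\crochet{C}^*\!(\rho_0)\geq\rho'$, done) or violates the \emph{next} constraint $c_{i+1}$ of the cycle --- this one step is the only place bounded-increasingness is invoked, straight from the definition. Iterating produces an infinite strictly increasing chain of valuations below $\crochet{C}^*\!(\rho_0)$, so some cycle variable equals $+\infty$ in $\crochet{C}^*\!(\rho_0)$, and the constraint outputting that variable is upper-saturated for free since $+\infty\geq f(\top)$. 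So the paper trades your global machinery (freezing externals, closure of bounded-increasingness under composition, plateau-freeness, the lfp folding $v_0=K\vee h(v_0)$ --- each true, but each needing verification) for a local argument with no auxiliary lemmas and no finiteness bookkeeping, its contradiction being reached through a variable hitting $+\infty$ rather than through integrality below a finite bound. Your version, once completed as above, buys a cleaner structural picture --- a single unary self-map with nondecreasing gap $h(x)-x$ --- that concentrates all the arithmetic in one place and could be reused elsewhere.
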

\begin{proof}
  \emph{(Sketch).}
  Let $X_0\rightarrow c_1\rightarrow X_1\rightarrow \cdots \rightarrow c_n \rightarrow X_n=X_0$ be the unique (up to a cyclic permutation) cycle in the graph associated to $\CS$. 
  Consider a pre-solution $\rho_0$ of $\CS$ that is not a solution. Let us denote by $(\rho_i)_{i\geq 0}$ the sequence of valuations defined inductivelly by $\rho_{i+1}=\rho_i\vee \crochet{C}\!(\rho_i)$.
  There are two cases:
  \begin{itemize}
  \item either there exists $i\geq 0$ such that $\rho_i$ upper-saturates a constraint $c_j$. Since $\rho_i\leq \crochet{C}^*\!(\rho_0)$, we deduce that $\crochet{C}^*\!(\rho_0)$ upper-saturates $c_j$.
  \item or $c_1,\ldots,c_n$ are not upper-saturated by any of the $\rho_i$. As these constraints are bounded-increasing, the sequence $(\rho_i)_{i\geq 0}$ is strictly increasing. Thus  $(\bigvee_{i\geq 0}\rho_i)(X_j)=+\infty$ for any $1\leq j\leq n$. Since  $\bigvee_{i\geq 0}\rho_i\leq \crochet{C}^*\!(\rho_0)$, we deduce that $\crochet{C}^*\!(\rho_0)$ upper-saturates $c_1,\ldots,c_n$.
  \end{itemize}
  In both cases, $\crochet{C}^*\!(\rho_0)$ upper-saturates at least one constraint.
  \qed
\end{proof}

\medskip
\begin{lstlisting}[emph={CyclicSolve}]
  CyclicSolve($\CS = (\Var, C)$ : a cyclic bounded-increasing constraint system,
              $\rho_0$ : a valuation)
  let $X_0\rightarrow c_1\rightarrow X_1\cdots \rightarrow c_n\rightarrow X_n=X_0$ be the ``unique'' elementary cycle
  $\rho \leftarrow \rho_0$
  for $i=1$ to $n$ do
       $\rho\leftarrow \rho\vee\crochet{c_i}\!(\rho)$
  for $i=1$ to $n$ do
       $\rho\leftarrow \rho\vee\crochet{c_i}\!(\rho)$
  if $\rho\geq \crochet{C}\!(\rho)$
       return $\rho$
  for $i=1$ to $n$ do
       $\rho(X_i)\leftarrow +\infty$
  for $i=1$ to $n$ do
       $\rho\leftarrow \rho\wedge\crochet{c_i}\!(\rho)$
  for $i=1$ to $n$ do
       $\rho\leftarrow \rho\wedge\crochet{c_i}\!(\rho)$
  return $\rho$
\end{lstlisting}
\medskip

\begin{proposition}
 The algorithm \textsf{CyclicSolve} returns $\crochet{C}^*\!(\rho_0)$ for any cyclic constraint system $\CS$ and for any valuation $\rho_0$.
\end{proposition}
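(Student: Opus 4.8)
The plan is to show that \textsf{CyclicSolve} computes $\rho^\ast = \crochet{C}^*\!(\rho_0)$, the least solution of $\CS$ above $\rho_0$. Write the elementary cycle as $X_0 \to c_1 \to X_1 \cdots \to c_n \to X_n = X_0$; since the system is cyclic, each $c_i$ is the only constraint with output $X_i$, its sole cyclic input is $X_{i-1}$, and all remaining input variables lie outside the SCC. Those variables are never outputs, so they keep their $\rho_0$-value throughout the run, and in $\rho^\ast$ as well. First I would record a structural identity obtained from minimality by an exchange argument: for every cyclic $i$,
$$\rho^\ast(X_i) \;=\; \rho_0(X_i) \,\vee\, f_i\big(\rho^\ast(X_{i-1}), \ldots\big),$$
because lowering $\rho^\ast(X_i)$ to the right-hand side would again give a solution above $\rho_0$. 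I would also note the upward invariant: each assignment $\rho \leftarrow \rho \vee \crochet{c_i}\!(\rho)$ preserves $\rho_0 \le \rho \le \rho^\ast$ (as $\rho^\ast$ is a post-solution). Finally, a direct computation shows that after the two upward passes the constraints $c_2, \ldots, c_n$ are all satisfied and only $c_1$ may fail (its cyclic input $X_n$ is rewritten at the very end of the second pass). Hence whenever the test at line~9 succeeds, $\rho$ is a solution lying in $[\rho_0, \rho^\ast]$ and therefore equals $\rho^\ast$, so the value returned at line~10 is correct.

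Next I would analyse the case where line~9 fails and the algorithm proceeds to the decreasing phase. Call $i$ a \emph{cut} if $\rho^\ast(X_i) = \rho_0(X_i)$ or $\rho^\ast(X_i) = f_i(\bot)$. At a cut the value of $X_i$ does not depend on $X_{i-1}$: using the invariant and the identity above, once $c_i$ has been applied in the first pass, $X_i$ is pinned to $\rho^\ast(X_i)$ and stays there. A cut therefore breaks the circular dependency, and propagating along the resulting acyclic chain shows that the two upward passes already reach $\rho^\ast$ exactly (the single wrap-around from $X_n$ to $X_1$ is crossed at the start of the second pass). Consequently, if the algorithm reaches the decreasing phase there is no cut: for every $i$ we have $\rho_0(X_i) < \rho^\ast(X_i)$ and $f_i(\bot) < \rho^\ast(X_i) = f_i(\rho^\ast(X_{i-1}), \ldots)$, so every cycle constraint is strictly tight and $\rho^\ast$ is an exact fix-point of the cycle.

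For the decreasing phase I would first check that after setting the cycle variables to $+\infty$ the current valuation satisfies $\rho \ge \rho^\ast$, and that tightness makes each step $\rho \leftarrow \rho \wedge \crochet{c_i}\!(\rho)$ preserve $\rho \ge \rho^\ast$ (since $f_i(\rho^\ast(X_{i-1}), \ldots) = \rho^\ast(X_i)$). The crux is an \emph{anchoring} argument: under the no-cut hypothesis some constraint $c_a$ is upper-saturated by $\rho^\ast$, i.e.\ $\rho^\ast(X_a) = f_a(\top)$. Indeed, were no constraint upper-saturated, then for each $i$ the value $f_i(\rho^\ast(X_{i-1}), \ldots)$ would lie strictly between $f_i(\bot)$ and $f_i(\top)$; bounded-increasingness would then give $f_i(\rho^\ast(X_{i-1}) - 1, \ldots) \le \rho^\ast(X_i) - 1$, so decrementing every cycle variable of $\rho^\ast$ by one would produce a strictly smaller solution still above $\rho_0$, contradicting minimality. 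At the anchor we have $f_a(x, \ldots) = f_a(\top) = \rho^\ast(X_a)$ for every $x \ge \rho^\ast(X_{a-1})$; hence in the first decreasing pass $c_a$ sets $X_a$ to $\rho^\ast(X_a)$ irrespective of its still-large input, and tightness then propagates the exact values $\rho^\ast(X_{a+1}), \ldots$ downstream. With a single wrap-around this is completed during the second decreasing pass, so the algorithm returns $\rho^\ast$.

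The main obstacle is precisely this counting — that two sweeps suffice in each phase — together with the anchoring argument, which is exactly where the bounded-increasing hypothesis (and, in spirit, Lemma~\ref{lem:sunny}) is indispensable: monotonicity alone would permit an anchorless, strictly increasing cyclic composition whose least fix-point above $\rho_0$ is not attained by finitely many Gauss--Seidel sweeps. I would therefore concentrate the effort on making the ``only $c_1$ (resp.\ only $c_a$) matters after two sweeps'' bookkeeping precise and on establishing the existence of an upper-saturated anchor; the remaining invariants are routine consequences of monotonicity and the definition of $\crochet{C}^*$.
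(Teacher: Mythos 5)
Your proof is correct, and it reaches the paper's conclusion by a genuinely different route in the key intermediate step. The paper's proof works operationally on the computed valuation at line~9 (call it $\rho_2$): it first shows that if the test fails then $\rho_2$ is a pre-solution that lower-saturates no constraint, and then invokes Lemma~3.2 --- whose proof builds an infinite strictly increasing sequence of valuations forcing some variable to $+\infty$ --- to conclude that $\crochet{C}^*\!(\rho_0)$ upper-saturates some constraint; the descending phase is then handled essentially as you do (invariant $\rho \geq \crochet{C}^*\!(\rho_0)$, the upper-saturated constraint pins its output variable to its exact value, and tightness propagates exactness around the cycle within two sweeps). You instead reason structurally on the target $\rho^\ast = \crochet{C}^*\!(\rho_0)$: your ``cut'' analysis shows that if the ascending phase has not already converged then every cycle constraint is strictly tight and not lower-saturated by $\rho^\ast$, and your anchor is produced by a finite perturbation argument (decrement every cycle variable of $\rho^\ast$ by one and contradict minimality) rather than by the limit argument of Lemma~3.2. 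The paper's factoring buys reuse --- Lemma~3.2 is also the engine of the termination proof of \textsf{SolveBI} (Proposition~3.4) --- while your route buys self-containedness and a sharper picture of the ascending phase, since a cut characterizes exactly when two upward sweeps suffice; note that both arguments exploit the specific structure of $\Zrond$, yours through discreteness and the paper's through unboundedness. Two small points would make your anchor argument airtight: first, the decrement is legitimate because no-cut plus no-upper-saturation force $f_i(\bot) < \rho^\ast(X_i) < f_i(\top)$, hence $\rho^\ast(X_i) \in \Z$; second, bounded-increasingness yields the strict inequality $f_i(\rho^\ast(X_{i-1})-1,\ldots) < \rho^\ast(X_i)$ only under the side condition $f_i(\rho^\ast(X_{i-1})-1,\ldots) > f_i(\bot)$, but in the remaining case one has $f_i(\rho^\ast(X_{i-1})-1,\ldots) = f_i(\bot) < \rho^\ast(X_i)$, so the bound $f_i(\rho^\ast(X_{i-1})-1,\ldots) \leq \rho^\ast(X_i)-1$ that you need holds in either case by integrality.
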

\begin{proof}
  \emph{(Sketch).} The first two loops (lines 5--8) propagate the valuation $\rho_0$ along the cycle two times.  If the resulting valuation is not a solution at this point, then it is a pre-solution and no constraint is lower-saturated.  From Lemma~\ref{lem:sunny}, we get that $\crochet{C}^*\!(\rho_0)$ upper-saturates some constraint.  Observe that the valuation $\rho$ after the third loop (lines 11--12) satisfies $\crochet{C}^*\!(\rho_0) \sqsubseteq \rho$. The descending iteration of the last two loops yields (at line 17) $\crochet{C}^*\!(\rho_0)$.
  \qed
\end{proof}

We may now present our cubic time algorithm for solving bounded-increasing constraint systems. The main loop of this algorithm first performs $|C|+1$ rounds of Round Robin iterations and keeps track for each variable of the last constraint that updated its value. This information is stored in a partial function $\lambda$ from $\X$ to $C$. The second part of the main loop checks whether there exists a cycle in the subgraph induced by $\lambda$, and if so it selects such a cycle and calls the \textsf{CylicSolve} algorithm on it.

\medskip\medskip
\begin{lstlisting}[emph={SolveBI}]
  SolveBI($\CS=(\X,C)$ : a bounded-increasing constraint system,
               $\rho_0$ : an initial valuation)
  $\rho\leftarrow \rho_0 \vee \crochet{C}\!(\rho_0)$
  while $\dbrkts{C}\!(\rho) \not\sqsubseteq \rho$ 
       $\lambda\leftarrow\emptyset$                                $\{$ $\lambda$ is a partial function from $\X$ to $C$ $\}$
       repeat $|C|+1$ times
            for each $c\in C$
                 if $\rho\not\geq \crochet{c}\!(\rho)$
                      $\rho\leftarrow \rho\vee \crochet{c}\!(\rho)$
                      $\lambda(X)\leftarrow c$, where $X$ is the input variable of $c$
        if there exists an elementary cycle $X_0\rightarrow \lambda(X_1)\rightarrow X_1\cdots \lambda(X_n)\rightarrow X_0$
             construct the corresponding cyclic flattening $(\CS',\kappa)$
             $\rho'  \leftarrow \rho \circ \varren$
             $\rho'' \leftarrow \textsf{CyclicSolve}(\CS',\rho')$ 
             $\rho   \leftarrow \rho \vee \valren(\rho'')$
  return $\rho$
\end{lstlisting}
\medskip\medskip

Note that the \textsf{SolveBI} algorithm is an instance of the \textsf{Solve} semi-algorithm where flattenings are deduced from cycles induced by the partial function $\lambda$. 
The following proposition \ref{thm:flathalf} shows that this algorithm terminates.
\begin{proposition}\label{thm:flathalf}
  The algorithm \textsf{SolveBI} returns the least solution $\crochet{C}^*\!(\rho_0)$ of a bounded-increasing constraint system $\CS$ greater than a valuation $\rho_0$. Moreover, the number of times the while loop is executed is bounded by one plus the number of constraints that are upper-saturated for $\crochet{C}^*\!(\rho_0)$ but not for $\rho_0$.
\end{proposition}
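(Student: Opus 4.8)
Proposition~\ref{thm:flathalf} asserts two things: partial correctness (the returned valuation is $\crochet{C}^*\!(\rho_0)$) and the iteration bound. Partial correctness is essentially free. Since \textsf{SolveBI} is an instance of the generic \textsf{Solve} semi-algorithm — the cycles of $\lambda$ merely supply the flattenings, and the \textsf{CyclicSolve} call returns $\crochet{C'}^*\!(\rho')$ by its own correctness proposition — the correctness of \textsf{Solve} already guarantees that whenever the loop halts the current valuation equals the least solution $\crochet{C}^*\!(\rho_0)$ above $\rho_0$. The real content is therefore the termination bound, and my plan is to charge, to each execution of the while loop other than the last, one \emph{freshly upper-saturated} constraint.

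First I would fix the monotone skeleton. Throughout the run one has $\rho_0 \sqsubseteq \rho \sqsubseteq \crochet{C}^*\!(\rho_0)$: this is preserved by the Round-Robin joins (each $\crochet{c}\!(\rho) \sqsubseteq \crochet{C}^*\!(\rho_0)$) and by the update $\rho \leftarrow \rho \vee \valren(\rho'')$ at line~15, because Lemma~\ref{lemma:fix-point-under-approximation} gives $\valren(\crochet{C'}^*\!(\backvalren(\rho))) \sqsubseteq \crochet{C}^*\!(\rho) \sqsubseteq \crochet{C}^*\!(\rho_0)$. Consequently the set of upper-saturated constraints only grows (upper-saturation is $\sqsubseteq$-monotone) and stays trapped between $U_0$, the constraints already upper-saturated by $\rho_0$, and $U_\infty$, those upper-saturated by $\crochet{C}^*\!(\rho_0)$, with $U_0 \subseteq U_\infty$. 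Moreover the initial join $\rho \leftarrow \rho_0 \vee \crochet{C}\!(\rho_0)$ (line~3) forces $\rho(X) \geq f_c(\bot)$ for every constraint $c$ with output $X$, an inequality that then holds forever since $\rho$ never decreases. If every non-final iteration strictly enlarges the upper-saturated set while remaining inside $U_\infty$, then the number of such iterations is at most $|U_\infty \setminus U_0|$, and adding the final one yields the announced bound $1 + |U_\infty \setminus U_0|$.

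The core is to analyse one non-final iteration, i.e. one after which $\dbrkts{C}\!(\rho) \not\sqsubseteq \rho$ still holds. Here I would establish two facts. \textbf{(a)} A \emph{Bellman--Ford-style} round-counting lemma: if $\rho$ is not a solution after the $|C|+1$ passes (line~6), then $\lambda$ contains an elementary cycle all of whose constraints are \emph{active}, meaning $\crochet{c_i}\!(\rho)(X_i) > \rho(X_i)$. Indeed, each pass must then perform at least one strict update, and chasing causality backwards — a strict update of a variable is triggered by a strictly larger input produced one pass earlier by the $\lambda$-constraint feeding it — builds a chain of constraints of length exceeding $|C|$, so some constraint repeats and closes a $\lambda$-cycle on which every constraint is currently active, hence \emph{not} upper-saturated. \textbf{(b)} On the associated cyclic flattening $(\CS',\varren)$, the restriction $\rho' = \backvalren(\rho)$ is a pre-solution that lower-saturates no constraint and is not a solution. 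It is a pre-solution because each $c_i$ was the last constraint to write $X_i$ and its inputs only grew afterwards, so by monotonicity $\rho(X_i) \leq \crochet{c_i}\!(\rho)(X_i)$; it lower-saturates nothing because each $c_i$ was applied by a strict increase from a value already $\geq f_{c_i}(\bot)$, whence $\rho(X_i) > f_{c_i}(\bot)$; and it is not a solution precisely because the cycle is active. Lemma~\ref{lem:sunny} then forces $\rho'' = \crochet{C'}^*\!(\rho')$ to upper-saturate some copied constraint $c_j'$; as no cycle constraint is upper-saturated at $\rho'$, this upper-saturation is new, and pushing it forward through $\valren$ at line~15 upper-saturates $c_j$ in $\rho$ for the first time.

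I expect the main obstacle to be part \textbf{(a)}: carrying out the pigeonhole/causality bookkeeping over the $|C|+1$ passes, and — crucially — guaranteeing that the cycle actually handed to \textsf{CyclicSolve} is an active one. A satisfied $\lambda$-cycle would leave $\rho$ unchanged (there $\crochet{C'}^*\!(\rho') = \rho'$ and $\valren \circ \backvalren = \Id$), contributing no new upper-saturation and breaking the accounting; so the delicate point is to show that the round-counting chain does select an active, non-upper-saturated cycle. Everything else — the monotone invariant, the pre-solution and no-lower-saturation checks, and the final counting — is routine given Lemma~\ref{lem:sunny}, Lemma~\ref{lemma:fix-point-under-approximation}, and the correctness of \textsf{CyclicSolve}.
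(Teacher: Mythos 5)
Your overall skeleton coincides with the paper's: partial correctness comes for free because \textsf{SolveBI} instantiates \textsf{Solve} (with \textsf{CyclicSolve} computing $\crochet{C'}^*$), the invariant $\rho_0 \sqsubseteq \rho \sqsubseteq \crochet{C}^*\!(\rho_0)$ makes the set of upper-saturated constraints grow monotonically, and the bound follows by charging each non-final iteration of the while loop one freshly upper-saturated constraint. The gap is exactly where you suspect it, in step \textbf{(a)}, and it is twofold. First, the claim you aim for is too strong to be true: on a $\lambda$-cycle, the constraint that wrote the \emph{latest-updated} cycle variable has typically already consumed the final values of its inputs, so at the end of the $|C|+1$ passes it is satisfied with equality, not active; no pigeonhole/causality chase can therefore produce a cycle ``all of whose constraints are active''. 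Second, even the weaker claim that \emph{some} active cycle exists would not repair the accounting, because line 11 hands \textsf{CyclicSolve} an \emph{arbitrary} elementary cycle of $\lambda$; you acknowledge this yourself and leave it open, so the key per-iteration progress claim is not established.

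The paper closes this gap without hunting for a special cycle, by proving two facts. (i) If $\lambda$ has no cycle, the iteration is the final one: order the constraints $\{\lambda(X) \st X \in \X\}$ topologically as $c_1, \ldots, c_m$; an induction shows that the value $\rho_1$ of $\rho$ after the passes satisfies $\rho_1 \leq \crochet{c_1 \cdots c_m}\!(\rho_a)$, where $\rho_a$ is the value before the passes, and since the inner loop runs $|C|+1 \geq m+1$ times, $\rho$ already dominates $\crochet{c_1 \cdots c_m}\!(\rho_a) \geq \rho_1$ after $m$ passes, so one extra pass gives $\rho_1 \geq \crochet{c}\!(\rho_1)$ for every $c \in C$, i.e.\ $\rho_1$ is a solution. (ii) \emph{Every} elementary $\lambda$-cycle satisfies the hypotheses of Lemma~\ref{lem:sunny}: $\rho' = \rho_1 \circ \varren$ is a pre-solution because each $c_i$ was the last constraint to write $X_i$ and its inputs only grew afterwards; nothing is lower-saturated, by your own initial-join observation; and, unless some constraint is already newly upper-saturated at $\rho_1$ (in which case the iteration has paid its debt, noting that $\lambda$-constraints cannot be upper-saturated at $\rho_a$ since they performed strict updates), the cycle is \emph{not} a solution: take the cycle variable $X_i$ whose last strict update is the most recent, and look at the constraint $c_{i+1}$ that \emph{reads} $X_i$ --- it last wrote $X_{i+1}$ strictly before $X_i$'s last increase, so its input has strictly grown since, and bounded-increasingness (using that $c_{i+1}$ is neither lower- nor upper-saturated at $\rho_1$) forces $\crochet{c_{i+1}}\!(\rho_1)(X_{i+1}) > \rho_1(X_{i+1})$. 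This ``latest-updated variable'' argument is what must replace your causality chase: with it, whichever cycle the algorithm picks, Lemma~\ref{lem:sunny} yields the fresh upper-saturation and your counting goes through.
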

\begin{proof}
  \emph{(Sketch).}
  Observe that initially $\rho=\rho_0\vee\crochet{C}\!(\rho_0)$. Thus, if during the execution of the algorithm $\rho(X)$ is updates by a constraint $c$ then necessary $c$ is not lower-saturated. That means if $\lambda(X)$ is defined then $c=\lambda(X)$ is not lower-saturated. 
  
  Let $\rho_0$ and $\rho_1$ be the values of $\rho$ respectively before and after the execution of the first two nested loops (line 5-9) and let $\rho_2$ be the value of $\rho$ after the execution of line~14.

  Observe that if there does not exist an elementary cycle satisfying the condition given in line 11, the graph associated to $\CS$ restricted to the edges $(X,c)$ if $c=\lambda(X)$ and the edges $(X_i,c)$ if $X_i$ is an input variable of $c$ is acyclic. This graph induces a natural partial order over the constraints $c$ of the form $c=\lambda(X)$. An enumeration $c_1,\ldots,c_m$ of this constraints compatible with the partial order provides the relation $\rho_1\leq \crochet{c_1\ldots c_m}\!(\rho_0)$. Since the loop 6-9 is executed at least $m+1$ times, we deduce that $\rho_1$ is a solution of $\C$.

  Lemma \ref{lem:sunny} shows that if $\rho_1$ is not a solution of $\CS$ then at least one constraint is upper-saturated for $\rho_2$ but not for $\rho_0$. We deduce that the number of times the while loop is executed is bounded by one plus the number of constraints that are upper-saturated for $\crochet{C}^*\!(\rho_0)$ but not for $\rho_0$.
  \qed
\end{proof}

\subsection{Integer constraint systems}\label{sub:ics}
A \emph{test function} is a function $\theta_{>b}$ or $\theta_{\geq b}$ with $b\in\Zrond$ of the following form:
$$\theta_{\geq b}(x,y)=
  \begin{cases}
    y & \text{if } x\geq b \\
    -\infty & \text{otherwise}\\
  \end{cases}
  \;\;\;\;\;\;
  \theta_{>b}(x,y)=
  \begin{cases}
    y & \text{if } x>b \\
    -\infty & \text{otherwise}\\
   \end{cases}
$$
A \emph{test constraint} is a constraint of the form $X\geq \theta_{\sim b}(X_1,X_2)$ where $\theta_{\sim b}$ is a test function. Such a constraint $c$ is said \emph{active} for a valuation $\rho$ if $\rho(X_1)\sim b$. Given a valuation $\rho$ such that $c$ is active, observe that $\crochet{c}\!(\rho)$ and $\crochet{c'}\!(\rho)$ are equal where $c'$ is the bounded-increasing constraint $X\geq X_2$. This constraint $c'$ is called the \emph{active form} of $c$ and denoted by $\act{c}$.

\medskip

In the sequel, an \emph{integer constraint} either refers to a bounded-increasing constraint or a test-constraint.

\medskip
\begin{lstlisting}[emph={SolveInteger}]
  SolveInteger($\CS=(\X,C)$ : an integer constraint system)
  $\rho\leftarrow \bot$
  $C_t\leftarrow$ set of test constraints in $C$
  $C'\leftarrow$ set of bounded-increasing constraints in $C$ 
  while $\dbrkts{C}\!(\rho) \not\sqsubseteq \rho$ 
       $\rho\leftarrow\textsf{SolveBI}((\X,C'),\rho)$
       for each $c\in C_t$
            if $c$ is active for $\rho$
                 $C_t\leftarrow C_t\moins\{c\}$
                 $C'\leftarrow C'\cup\{\act{c}\}$
  return $\rho$
\end{lstlisting}
\medskip

\begin{theorem}
  The algorithm \textsf{SolveInteger} computes the least solution of an integer constraint system $\CS=(\X,C)$ by performing $O((|\X|+|C|)^3)$ integer comparisons and image computation by some bounded-increasing functions. 
\end{theorem}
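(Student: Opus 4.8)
The plan is to treat \textsf{SolveBI} as a black box via Proposition~\ref{thm:flathalf} (it returns $\crochet{C'}^*\!(\rho_0)$), so that the correctness of \textsf{SolveInteger} reduces to controlling how the set $C'$ of bounded-increasing constraints grows as test constraints get activated. The central loop invariant I would maintain, checked at the top of the while loop, is
$$\rho \ \leq \ \lst{\CS} \ = \ \crochet{C}^*\!(\bot),$$
together with the bookkeeping fact that each active form $\act{c}$ currently in $C'$ was inserted when its test constraint $c$ was active for the then-current $\rho$.

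First I would establish that this invariant survives one \textsf{SolveBI} call, and the key observation is that $\lst{\CS}$ is itself a post-solution of the current system $(\X, C')$. Indeed, every original bounded-increasing constraint of $C$ lies in $C'$ and is satisfied by $\lst{\CS}$, since $\lst{\CS}$ solves $C$; and for every active form $\act{c} \in C'$, the fact that $c$ was active for a valuation $\leq \lst{\CS}$, combined with the upward-closedness of activity (if $\rho(X_1) \sim b$ and $\rho \leq \rho'$ then $\rho'(X_1) \sim b$), shows that $c$ is active for $\lst{\CS}$, whence $\crochet{\act{c}}\!(\lst{\CS}) = \crochet{c}\!(\lst{\CS}) \leq \lst{\CS}$. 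Since the call returns the least post-solution $\crochet{C'}^*\!(\rho)$ of $C'$ above $\rho$, and $\lst{\CS}$ is a post-solution of $C'$ above $\rho$, the returned valuation stays $\leq \lst{\CS}$; the activation step only shuffles constraints between $C_t$ and $C'$, so the invariant persists. On exit the test $\crochet{C}\!(\rho) \not\sqsubseteq \rho$ fails, so $\rho$ solves $C$ while $\rho \leq \lst{\CS}$, forcing $\rho = \lst{\CS}$.

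Next I would bound the number of outer iterations. I would show that any outer iteration activating no new test constraint must be the last: after \textsf{SolveBI}, $\rho$ solves $C'$; the test constraints still in $C_t$ are inactive for $\rho$ and hence trivially satisfied (their image is $-\infty$); and every active form $\act{c} \in C'$ corresponds to a constraint active for $\rho$, so $\crochet{c}\!(\rho) = \crochet{\act{c}}\!(\rho) \leq \rho$. Thus $\rho$ solves $C$ and the loop exits. Since activity is permanent and each non-final iteration moves at least one test constraint from $C_t$ into $C'$, there are at most $|C|+1$ outer iterations.

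The main effort, and the step I expect to be the real obstacle, is the cubic bound, which requires an amortized accounting of the inner \textsf{SolveBI} iterations across all calls rather than a per-call estimate. By Proposition~\ref{thm:flathalf}, a single \textsf{SolveBI} call runs its while loop at most one plus the number of constraints it newly upper-saturates. Because upper-saturation of a constraint is an upward-closed property of $\rho$ and $\rho$ only increases, each of the at most $|C|$ bounded-increasing constraints (the originals plus the active forms, whose total is bounded by $|C|$) becomes upper-saturated at most once over the whole run. Summing $1 + (\text{newly saturated})$ over the $O(|C|)$ calls therefore yields only $O(|C|)$ inner iterations in total. Each inner iteration costs $O((|\X|+|C|)^2)$ operations, dominated by the $(|C|+1)$ Round-Robin sweeps over $|C|$ constraints, with cycle detection in the $\lambda$-subgraph and the \textsf{CyclicSolve} call adding only $O(|\X|+|C|)$. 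Multiplying gives $O((|\X|+|C|)^3)$; the per-iteration activation checks and while-tests contribute only $O(|C|^2)$ more, so the overall cost is $O((|\X|+|C|)^3)$ integer comparisons and bounded-increasing image computations, as claimed.
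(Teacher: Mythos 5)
Your proposal is correct and takes essentially the same route as the paper's proof: the outer loop is bounded by the number of test-constraint activations (a non-activating iteration is the last one), and the total number of inner \textsf{SolveBI} iterations is amortized over all calls using Proposition~\ref{thm:flathalf} together with the permanence of upper-saturation, which is exactly the paper's computation $\sum_{i}(m_i-1)\leq |C|$, hence $\sum_i m_i \leq 2|C|$. The only difference is that you also spell out the least-solution invariant $\rho \leq \lst{\CS}$ via upward-closedness of test activation---a detail the paper's proof leaves implicit rather than a genuinely different approach.
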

\begin{proof}
  Let us denote by $n_t$ be the number of test constraints in $C$. Observe that if during the execution of the while loop, no test constraints becomes active (line 7-10) then $\rho$ is a solution of $\CS$ and the algorithm terminates. Thus this loop is executed at most $1+n_t$ times. Let us denote by $m_1,\ldots,m_k$ the integers such that $m_i$ is equal to the number of times the while loop of \textsf{SolveBI} is executed. Since after the execution there is $m_i-1$ constraints that becomes upper-saturated, we deduce that $\sum_{i=1}^k(m_i-1)\leq n$ and in particular $\sum_{i=1}^km_i\leq n+k\leq 2.|C|$. Thus the algorithm \textsf{SolveInteger} computes the least solution of an integer constraint system $\CS=(\X,C)$ by performing $O((|\X|+|C|)^3)$ integer comparisons and image computation by some bounded-increasing functions. 
  \qed 
\end{proof}

\begin{remark}
  We deduce that any integer constraint system is MFP-flattable.
\end{remark}




\section{Interval Constraints}\label{sec:interval}
In this section, we provide a cubic time constraint solver for intervals. Our solver is based on the usual \cite{Su:2004:TACAS, GS-ESOP07} encoding of intervals by two integers in $\Zrond$.  The main challenge is the translation of an interval constraint system with full multiplication into an integer constraint system.

\medskip

An \emph{interval} $I$ is subset of $\Z$ of the form $\{x\in\Z;\;a\leq x\leq b\}$ where $a,b\in\Zrond$. We denote by $\Int$ the complete lattice of intervals partially ordered with the inclusion relation $\sqsubseteq$.
The \emph{inverse} $-I$ of an interval $I$, the \emph{sum} and the \emph{multiplication} of two intervals $I_1$ and $I_2$ are defined as follows:
$$
-I=\{-x;\; x\in I\}
\;\;\;\;\;\;\;\;
\begin{array}{@{}rclcl@{}}
I_1&+&I_2 &=& \{x_1+x_2;\;(x_1,x_2)\in I_1\times I_2\}\\
I_1&.&I_2 &=& \bigsqcup\{x_1.x_2;\;(x_1,x_2)\in I_1\times I_2\}\\
\end{array}
$$
We consider interval constraints of the following forms where $I\in \Int$:
$$
X\sqsupseteq - X_1
\;\;\;\;\; 
X\sqsupseteq I
\;\;\;\;\;
X\sqsupseteq X_1 + X_2 
\;\;\;\;\; 
X\sqsupseteq X_1 \sqcap I
\;\;\;\;\;
X\sqsupseteq X_1 . X_2
$$
Observe that we allow arbitrary multiplication between intervals, but we restrict intersection to intervals with a constant interval.


\medskip

We say that an interval constraint system $\CS=(\X,C)$ has the positive-multiplication property if for any constraint $c\in C$ of the form $X\sqsupseteq X_1.X_2$, the intervals $\lst{\CS}(X_1)$ and $\lst{\CS}(X_2)$ are included in $\Nat$. Given an interval constraint system $\CS=(\X,C)$ we can effectively compute an interval constraint system $\CS'=(\X',C')$ satisfying this property and such that $\X\subseteq \X'$ and $\lst{\CS}(X)=\lst{\CS'}(X)$ for any $X\in\X$. This constraint system $\CS'$ is obtained from $\CS$ by replacing the constraints $X\sqsupseteq X_1.X_2$ by the following constraints:
\begin{align*}
&X\sqsupseteq X_{1,u}.X_{2,u} &&   X_{1,u}\sqsupseteq X_1\sqcap \Nat\\
&X\sqsupseteq X_{1,l}.X_{2,l} &&   X_{2,u}\sqsupseteq X_2\sqcap \Nat\\
&X\sqsupseteq -X_{1,u}.X_{2,l} &&   X_{1,l}\sqsupseteq (-X_1)\sqcap \Nat\\
&X\sqsupseteq -X_{1,l}.X_{2,u} &&   X_{2,l}\sqsupseteq (-X_2)\sqcap \Nat
\end{align*}
Intuitively $X_{1,u}$ and $X_{2,u}$ corresponds to the positive parts of $X_1$ and $X_2$, while $X_{1,l}$ and $X_{2,l}$ corresponds to the negative parts.

\medskip

Let us provide our construction for translating an interval constraint system $\CS=(\X,C)$ having the positive multiplication property into an integer constraint system $\CS'=(\X',C')$. Since an interval $I$ can be naturally encoded by two integers $I^-,I^+\in\Zrond$ defined as the least upper bound of respectively $-I$ and $I$, we naturally assume that $\X'$ contains two integer variable $X^-$ and $X^+$ encoding each interval variable $X\in\X$. In order to extract from the least solution of $\CS'$ the least solution of $\CS$, we are looking for an integer constraint system $\CS'$ satisfying $(\lst{\CS}(X))^-=\lst{\CS'}(X^-)$ and $(\lst{\CS}(X))^+=\lst{\CS'}(X^+)$ for any $X\in\X$.

\medskip

As expected, a constraint $X\sqsupseteq -X_1$ is converted into $X^+\geq X_1^-$ and $X^-\geq X_1^+$, a constraint $X\sqsupseteq I$ into $X^+\geq I^+$ and $X^-\geq I^-$, and a constraint $X\sqsupseteq X_1+X_2$ into $X^-\geq X_1^- + X_2^-$ and $X^- \geq X_1^- + X_2^-$.
However, a constraint $X\sqsupseteq X_1\sqcap I$ cannot be simply translated into $X^-\geq X_1^-\wedge I^-$ and $X^+\geq X_1^+\wedge I^+$. In fact, these constraints may introduce imprecision when $\lst{\CS}(X)\cap I=\emptyset$. We use test functions to overcome this problem. Such a constraint is translated into the following integer constraints:
\begin{align*} 
&X^- \;\geq\; \theta_{\geq -I^+}(X_1^-,\theta_{\geq -I^-}(X_1^+,X_1^-\wedge I^-))\\
&X^+ \;\geq\; \theta_{\geq -I^-}(X_1^+,\theta_{\geq -I^+}(X_1^-,X_1^+\wedge I^+))
\end{align*}

For the same reason, the constraint $X\sqsupseteq X_1.X_2$ cannot be simply converted into $X^-\geq \mulmm(X_1^-,X_2^-)$ and $X^+\geq \mulpp(X_1^+,X_2^+)$. Instead, we consider the following constraints:
\begin{align*}
& X^- \;\geq\; \theta_{>-\infty}(X_1^-,\theta_{>-\infty}(X_1^+, \theta_{>-\infty}(X_2^-,\theta_{>-\infty}(X_2^+,\mulmm(X_1^-,X_2^-)))))\\
& X^+ \;\geq\; \theta_{>-\infty}(X_1^+,\theta_{>-\infty}(X_1^-, \theta_{>-\infty}(X_2^+,\theta_{>-\infty}(X_2^-,\mulpp(X_1^+,X_2^+)))))
\end{align*}
Observe in fact that $X^-\geq \mulmm(X_1^-,X_2^-)$ and $X^+\geq \mulpp(X_1^+,X_2^+)$ are precise constraint when the intervals $I_1=\lst{\CS}(X_1)$ and $I_2=\lst{\CS}(X_2)$ are non empty. Since, if this condition does not hold then $I_1.I_2=\emptyset$, the previous encoding consider this case by testing if the values of $X_1^-$, $X_1^+$, $X_2^-$, $X_2^+$ are strictly greater than $-\infty$.

\medskip

Now, observe that the integer constraint system $\CS'$ satisfies the equalities $(\lst{\CS}(X))^+=\lst{\CS'}(X^+)$ and $(\lst{\CS}(X))^-=\lst{\CS'}(X^-)$ for any $X\in\X$. Thus, we have proved the following theorem.
\begin{theorem}
   The least solution of an interval constraint system $\CS=(\X,C)$ with full multiplication can by computed in time $O((|\X|+|C|)^3)$ with integer manipulations performed in $O(1)$.
\end{theorem}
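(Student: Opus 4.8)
The plan is to reduce the interval problem to the integer problem already solved by \textsf{SolveInteger}, and then read off the complexity from the theorem on \textsf{SolveInteger}. Concretely, the reduction is the two-stage construction preceding the statement: first enforce the positive-multiplication property, then encode each interval variable $X$ by the pair of integer variables $(X^-, X^+)$ and each interval constraint by a constant number of integer constraints. The proof then consists of two independent obligations---a size bound and a correctness statement---after which the cubic bound is immediate.

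For the size bound, I would observe that enforcing the positive-multiplication property replaces each multiplication constraint by $O(1)$ new constraints and $O(1)$ fresh variables, so the preprocessed system has $O(|\X|+|C|)$ variables and $O(|C|)$ constraints while preserving $\lst{\CS}(X)$ on every original $X$. The integer encoding then doubles the variables and, after introducing $O(1)$ auxiliary variables per constraint to flatten the nested test functions into individual test and bounded-increasing constraints, multiplies the number of constraints by a constant. Hence the final integer system $\CS'$ has $O(|\X|+|C|)$ variables and $O(|C|)$ constraints, so that $|\X'|+|C'| = O(|\X|+|C|)$.

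The heart of the matter is the correctness statement, namely $(\lst{\CS}(X))^- = \lst{\CS'}(X^-)$ and $(\lst{\CS}(X))^+ = \lst{\CS'}(X^+)$ for every $X \in \X$, with respect to the encoding sending an interval $I$ to $(I^-, I^+) \in \Zrond^2$ (so $\emptyset \mapsto (-\infty,-\infty)$ and $\Z \mapsto (+\infty,+\infty)$). I would prove this by checking that the encoding is correct constraint-by-constraint and then lifting to least solutions via monotonicity and the characterization of the least solution as the least post-solution. The inverse, constant, and sum cases are routine: $(-I)^\pm = I^\mp$, the constant encodes directly, and the sum commutes with the encoding componentwise---including the empty case, since $-\infty + x = -\infty$ in $\Zrond$ automatically propagates emptiness.

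The main obstacle, as anticipated by the use of test functions, is that intersection and multiplication are \emph{not} soundly encoded by the naive componentwise $\wedge$ and $\mulpp/\mulmm$: when the result is the empty interval, these would return a spurious ``valid'' pair $(I^-,I^+)$ with $-I^- > I^+$ instead of the canonical $(-\infty,-\infty)$. For $X \sqsupseteq X_1 \sqcap I$ with $X_1 = [a_1,b_1]$ and $I=[c,d]$, I would verify that the guards $\theta_{\geq -I^+}(X_1^-,\cdot)$ and $\theta_{\geq -I^-}(X_1^+,\cdot)$ fire exactly when $a_1 \leq d$ and $c \leq b_1$, which together with nonemptiness of $X_1$ and $I$ characterizes nonemptiness of the intersection; on a nonempty intersection the inner $X_1^\pm \wedge I^\pm$ yields the endpoints of $[\max(a_1,c),\min(b_1,d)]$, and otherwise the guards collapse the result to $-\infty$. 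For $X \sqsupseteq X_1 . X_2$, the positive-multiplication property forces both operands into $\Nat$, so on nonempty operands $\mulpp(X_1^+,X_2^+)$ and $\mulmm(X_1^-,X_2^-)$ compute $b_1 b_2$ and $-a_1 a_2$ exactly; the four nested $\theta_{>-\infty}$ guards detect an empty operand (some endpoint equal to $-\infty$) and force the empty encoding, matching $I_1 . I_2 = \emptyset$. Combining the per-constraint correctness with the size bound and the theorem on \textsf{SolveInteger} yields $O((|\X|+|C|)^3)$ integer operations, each performed in $O(1)$, which is the claimed bound.
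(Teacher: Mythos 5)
Your proposal is correct and follows essentially the same route as the paper: the paper's proof of this theorem is precisely the reduction you describe---preprocessing to enforce the positive-multiplication property, encoding each interval variable by the integer pair $(X^-,X^+)$ with test-function guards to handle emptiness, and invoking the cubic bound of \textsf{SolveInteger}. The only difference is one of rigor, in your favor: you make explicit the obligations (linear size of the translation, flattening of the nested test functions into genuine integer constraints via auxiliary variables, and the per-constraint correctness of the encoding lifted to least solutions) that the paper dispatches with ``observe that.''
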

                
\begin{remark}
  We deduce that any interval constraint system is MFP-flattable. 
\end{remark}



\section{Conclusion and Future Work}
In this paper we have extended the acceleration framework from symbolic verification to the computation of MFP-solutions in data-flow analysis. Our approach leads to an efficient cubic-time algorithm for solving interval constraints with full addition and multiplication, and intersection with a constant.

\medskip

As future work, it would be interesting to combine this result with strategy iteration techniques considered in~\cite{GS-ESOP07} in order to obtain a polynomial time algorithm for the extension with full intersection. 
We also intend to investigate the application of the acceleration framework to other abstract domains.


\bibliographystyle{alpha}
\bibliography{this-biblio}


\onlyforpaperversion{full}{
\clearpage
\appendix

\newpage

\section{Proof of Lemma~3.2}
\gdef\thesection{3}
\setcounter{theorem}{1}

\begin{lemma}\label{lem:sunny}
  Let $\CS$ be a cyclic bounded-increasing constraint system. If $\rho_0$ is a pre-solution of $\CS$ that does not lower-saturate any constraint, then either $\rho_0$ is a solution or $\crochet{C}^*(\rho_0)$ upper-saturates a constraint.
\end{lemma}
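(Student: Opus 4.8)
The plan is to study the ascending Kleene iteration issued from $\rho_0$ along the unique elementary cycle $X_0 \to c_1 \to X_1 \to \cdots \to c_n \to X_n = X_0$ of $\CS$. Set $\rho_{i+1} = \rho_i \vee \dbrkts{C}(\rho_i)$. Because $\rho_0$ is a pre-solution, this chain is non-decreasing, every iterate satisfies $\rho_i \le \dbrkts{C}^*(\rho_0)$, and its supremum is exactly $\dbrkts{C}^*(\rho_0)$. I may assume $\rho_0$ is not a solution, since otherwise the first alternative of the conclusion already holds; the task is then to exhibit a constraint that $\dbrkts{C}^*(\rho_0)$ upper-saturates.

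First I would record two monotonicity facts. Upper-saturation of $c : X \ge f(\ldots)$ means $\rho(X) \ge f(\top)$, a property preserved when $\rho$ grows; dually, the hypothesis that $\rho_0$ lower-saturates no constraint, i.e.\ $\rho_0(X) > f(\bot)$, is also preserved along the chain. With these in hand the argument splits in two. If some iterate $\rho_i$ upper-saturates a constraint, then the larger valuation $\dbrkts{C}^*(\rho_0)$ upper-saturates it too, and we are done.

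The remaining, and essential, case is when no $\rho_i$ upper-saturates any $c_j$. Here the plan is to show that $(\rho_i)$ is strictly increasing on the cycle variables, forcing $\rho_i(X_j)$ to escape to $+\infty$. The two saturation hypotheses are precisely what keep every firing of $c_j$ inside its active region: whenever $c_j$ raises $X_j$ to a value $v = f_j(\text{inputs})$, non-lower-saturation gives $v > f_j(\bot)$ and non-upper-saturation gives $v < f_j(\top)$, so $f_j$ is evaluated exactly where, by the bounded-increasing hypothesis, it is strictly increasing in its cyclic input $X_{j-1}$ (its other inputs lie outside the SCC, hence remain fixed at their $\rho_0$-values and reduce $f_j$ to a strictly increasing unary slice). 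Since $\rho_0$ is not a solution, some $c_j$ fires at the first step and strictly raises $X_j$; strict monotonicity in the active region should then transmit this strict increase from $X_j$ to $X_{j+1}$ one round later, and around the unique cycle indefinitely. Taking the supremum would give $\dbrkts{C}^*(\rho_0)(X_j) = +\infty \ge f_j(\top)$ for each $j$, so every cycle constraint is upper-saturated.

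I expect the circulation step to be the main obstacle. One must argue that a strict increase entering $c_{j+1}$ is never merely absorbed by an already-large value of $X_{j+1}$, but genuinely produces a fresh strict increase at the next round, so the propagation loops forever rather than settling into an equilibrium. This is exactly where the full strength of the bounded-increasing property---strict increase throughout the active region---has to be used, and where I would spend most of the effort, tracking round by round that the value fed into each $c_j$ strictly exceeds the value it last produced at $X_j$, so that the chain cannot stabilize below $+\infty$ on the cycle.
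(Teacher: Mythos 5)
You follow the same route as the paper's own proof (Kleene iteration from $\rho_0$, case split on whether some iterate upper-saturates a constraint, and in the remaining case an escape to $+\infty$ on the cycle), but the step you yourself single out as ``the main obstacle'' --- that a strict increase travelling around the cycle is never absorbed by an already-large value --- is precisely the mathematical content of the lemma, and your proposal leaves it unresolved. The missing idea is that the \emph{pre-solution property is an invariant of the iteration}, and that it turns each firing into an equality rather than an inequality. Because $\CS$ is cyclic, $c_j$ is the \emph{unique} constraint with output variable $X_j$, so $\rho_{i+1}(X_j)=\rho_i(X_j)\vee\crochet{c_j}\!(\rho_i)(X_j)$; if moreover $\rho_i(X_j)\leq \crochet{c_j}\!(\rho_i)(X_j)$ for all $j$ (true for $i=0$ by hypothesis), then $\rho_{i+1}(X_j)=\crochet{c_j}\!(\rho_i)(X_j)\leq\crochet{c_j}\!(\rho_{i+1})(X_j)$ by monotonicity, so the invariant persists forever. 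This equality is exactly what forbids absorption: if $\rho_{i+1}(X_{j-1})>\rho_i(X_{j-1})$, then since $\crochet{c_j}\!(\rho_i)(X_j)\geq\rho_i(X_j)>f_j(\bot)$ (no constraint is lower-saturated, a property preserved upward) and $\crochet{c_j}\!(\rho_{i+1})(X_j)<f_j(\top)$ (otherwise $\rho_{i+2}$ upper-saturates $c_j$ and we are back in the first case), bounded-increasingness of $f_j$ in its unique on-cycle argument gives $\rho_{i+2}(X_j)\geq\crochet{c_j}\!(\rho_{i+1})(X_j)>\crochet{c_j}\!(\rho_i)(X_j)=\rho_{i+1}(X_j)$. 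Thus the strict increase provably advances one step around the cycle each round; since $\rho_0$ is not a solution, some constraint fires strictly at the start, and every cycle variable is driven to $+\infty$, upper-saturating its constraint. Your closing sentence gropes toward this (``the value fed into each $c_j$ strictly exceeds the value it last produced''), but without the pre-solution invariant --- which is also where the cyclicity hypothesis actually enters, via uniqueness of the writing constraint --- you only obtain $\rho_{i+1}(X_j)\geq\crochet{c_j}\!(\rho_i)(X_j)$, and that inequality points the wrong way for your transmission claim. The paper's detailed proof is organized around exactly this invariant: it shows that from any pre-solution that is not a solution and lower-saturates nothing, one strict firing yields again such a valuation (or an upper-saturation), and concludes with the same $+\infty$ argument.

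Two smaller inaccuracies, neither fatal. First, the supremum of the Kleene chain need not equal $\crochet{C}^*\!(\rho_0)$: bounded-increasing functions need not be continuous (e.g.\ $f(x)=0$ for $x<+\infty$ and $f(+\infty)=1$ is bounded-increasing), so only $\bigvee_i\rho_i\leq\crochet{C}^*\!(\rho_0)$ is available --- fortunately that inequality is all you use, and all the paper uses. Second, your remark that off-cycle inputs stay frozen at their $\rho_0$-values is correct but is itself a consequence of cyclicity (no constraint of $C$ has an off-cycle output variable); it deserves to be stated as such rather than taken as obvious.
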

\begin{proof}
  Let $X_0\rightarrow c_1\rightarrow X_1\rightarrow \cdots \rightarrow c_n \rightarrow X_n=X_0$ be the unique (up to a cyclic permutation) cycle in the graph associated to $\CS$.
  
  Let us prove that for any pre-solution $\rho$ that is not a solution and that does not lower-saturate any constraint, there exists a constraint $c\in C$ such that $\rho'=\crochet{c}(\rho)$ is a pre-solution satisfying $\rho'>\rho$ that either upper-saturates a constraint or that is not a solution. Since $\rho$ is not a solution, there exists a constraint $c_{i-1}$ such that the valuation $\rho'=\crochet{c_{i-1}}(\rho)$ satisfies $\rho'\not\leq \rho$. As $c_i$ only modifies the value of $X_{i-1}$, we get $\rho'(X_{i-1})>\rho(X_{i-1})$. Observe that if $\rho'$ upper-saturates $c_{i-1}$ we are done. Let us assume that $\rho'$ does not upper-saturate $c_{i-1}$. Let us show that $\rho'$ is not a solution of $C$. As $\rho$ is a pre-solution and $c_i$ is the unique variable that modifies $X_i$, we have $\rho(X_{i})\leq \crochet{c_{i}}(\rho)(X_{i})$. Since $\rho(X_{i-1})<\rho'(X_{i-1})$ and $c_i$ is neither upper-saturate nor lower-saturate for $\rho'$ and $\rho$ we get $\crochet{c_{i}}(\rho)(X_{i})<\crochet{c_{i}}(\rho')(X_{i})$ from $\rho<\rho'$. The relations $\rho(X_{i})=\rho'(X_{i})$, $\rho(X_{i})\leq \crochet{c_{i}}(\rho)(X_{i})$ and $\crochet{c_{i}}(\rho)(X_{i})<\crochet{c_{i}}(\rho')(X_{i})$ provide  the relation $\crochet{c_{i}}(\rho')(X_{i})>\rho'(X_{i})$. Thus $\rho'$ is not a solution.
  
  Assume by contradiction that $\crochet{C}^*(\rho_0)$ does not upper-saturate a constraint. Since $\rho_0$ is a pre-solution that is not a solution and such that any constraint $c\in C$ is not lower-saturated, from the previous paragraph, we get an infinite sequence $\rho_0<\ldots<\rho_k<\ldots$ of valuations satisfying $\rho_k\leq\crochet{C}^*(\rho_0)$. We deduce that there exists a variable $X_i$ such that $\vee_{k\geq 0}\rho_k(X_i)=+\infty$. Thus $\crochet{C}^*(\rho_0)(X_i)=+\infty$ and we have proved that $\crochet{C}^*(\rho_0)$ upper-saturates $c_{i+1}$. This contradiction proves that $\crochet{C}^*(\rho_0)$ upper-saturates at least one constraint in $C$.
  \qed
\end{proof}

\section{Proof of Proposition~3.3}

\gdef\thesection{3}
\setcounter{theorem}{2}

\begin{proposition}
 The algorithm \textsf{CyclicSolve} returns $\crochet{C}^*(\rho_0)$ for any cyclic constraint system $\CS$ and for any valuation $\rho_0$.
\end{proposition}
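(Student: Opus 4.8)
The plan is to follow the valuation $\mu = \crochet{C}^*\!(\rho_0)$ throughout the run and to exploit the rigidity of a cyclic system. Let $X_0\to c_1\to X_1\cdots\to c_n\to X_n=X_0$ be the unique elementary cycle, so each cycle variable $X_i$ is the output of exactly one constraint $c_i$, whose only cycle-input is $X_{i-1}$ (all other inputs lie outside the cycle and are never modified), and leastness gives $\mu(X_i)=\rho_0(X_i)\vee f_i(\mu(X_{i-1}))$. First I record the monotonicity invariant of the two ascending loops (lines~5--8): each step only joins $\crochet{c_i}\!(\rho)$ into $\rho$ and $\mu$ is a solution above $\rho_0$, so $\rho_0\leq\rho\leq\mu$ throughout. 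The decisive scheduling observation is that, because the sweeps run \emph{in place} in the fixed order $c_1,\dots,c_n$, after the second sweep every $c_i$ with $i\geq2$ reads a predecessor value $X_{i-1}$ that is already final, hence is satisfied; only $c_1$, which reads the wrap-around variable $X_n=X_0$, can still fail. Thus the valuation $\rho_{\mathrm{asc}}$ after line~8 is a solution iff $c_1$ holds, and in that case the test at line~9 returns $\rho_{\mathrm{asc}}$, a solution above $\rho_0$, hence $\geq\mu$; with $\rho_{\mathrm{asc}}\leq\mu$ this forces $\rho_{\mathrm{asc}}=\mu$.

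It remains to treat the branch where $c_1$ is violated. Here I first establish the \emph{pure} fix-point identity $\mu(X_i)=f_i(\mu(X_{i-1}))$ at every cycle node, by contraposition: if $\rho_0(X_{i_0})>f_{i_0}(\mu(X_{i_0-1}))$ then $\mu(X_{i_0})=\rho_0(X_{i_0})$, and since $\rho_0(X_{i_0})\leq\rho_{\mathrm{asc}}(X_{i_0})\leq\mu(X_{i_0})$ this value is exact from the outset; the cycle is then effectively cut at $X_{i_0}$ and, by the same in-place forward-finalization as above (the cut path wraps around at most once), the two sweeps make every $X_j$ equal to $\mu(X_j)$, so $c_1$ would hold --- a contradiction. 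With the identity available I apply Lemma~\ref{lem:sunny} to $\rho_{\mathrm{asc}}$: it is a pre-solution (each $X_i$ is left untouched by some other constraint, whence $\crochet{C}\!(\rho)\geq\rho$), and a short argument rules out lower-saturation (a cycle constraint frozen at $f_i(\bot)$ would again sever the feedback and make $c_1$ hold). The lemma then yields a constraint $c_{i^*}$ upper-saturated by $\mu$, and combining $\mu(X_{i^*})=f_{i^*}(\mu(X_{i^*-1}))\leq f_{i^*}(\top)$ with $\mu(X_{i^*})\geq f_{i^*}(\top)$ gives $\mu(X_{i^*})=f_{i^*}(\top)$.

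The heart of the proof is then the descending phase. After the loop at lines~11--12 the cycle variables carry $\top=+\infty$ and the external ones still carry their (unchanged) $\mu$-values, so $\rho\geq\mu$. I show the meet loops preserve this: at $X_i$, using $\rho\geq\mu$, monotonicity and the fix-point identity, $\rho(X_i)\wedge f_i(\rho(X_{i-1}))\geq\mu(X_i)\wedge f_i(\mu(X_{i-1}))=\mu(X_i)$, so no coordinate ever drops below $\mu$. Exactness is then seeded at the saturated node: the first time $c_{i^*}$ is processed, $f_{i^*}(\rho(X_{i^*-1}))$ lies between $f_{i^*}(\mu(X_{i^*-1}))=\mu(X_{i^*})$ and $f_{i^*}(\top)=\mu(X_{i^*})$, so $\rho(X_{i^*})$ is squeezed to $\mu(X_{i^*})$; and once any node is exact, the next node read from it is squeezed to its $\mu$-value as well. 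Since exactness thus propagates forward once around the cycle, the two meet sweeps leave $\rho=\mu$, which is what line~17 returns.

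I expect the descending phase to be the main obstacle, together with the coupling that makes it sound: the meet loops would genuinely \emph{under}-approximate at a node whose least-solution value is fixed by $\rho_0$ strictly above its propagated value $f_i(\mu(X_{i-1}))$, so correctness hinges entirely on the fact --- proved above via the ``only $c_1$ can fail'' observation --- that such pinning (and, likewise, lower-saturation of a cycle constraint) cannot occur once the ascending phase has failed to return. Making the saturation bookkeeping for Lemma~\ref{lem:sunny} fully rigorous, and checking the degenerate self-loop case $n=1$ (where being a pre-solution is no longer automatic), are the delicate points; the $\rho_0\leq\rho\leq\mu$ invariant and the two-sweep convergence count are then routine.
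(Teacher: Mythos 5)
Your proof has the same two-phase skeleton as the paper's: trap the ascending sweeps between $\rho_0$ and $\mu = \crochet{C}^*\!(\rho_0)$, invoke Lemma~\ref{lem:sunny} to obtain a constraint upper-saturated by $\mu$, then show the descending sweeps preserve $\rho \geq \mu$ and propagate exactness forward from the saturated node once around the cycle. Your descending phase coincides with the paper's, and the ascending-phase ingredients where you differ from the paper --- the scheduling observation that after two sweeps only $c_1$ can be violated, the characterization $\mu(X_i) = \rho_0(X_i) \vee f_i(\mu(X_{i-1}))$, and the contraposition argument for the fix-point identity $\mu(X_i) = f_i(\mu(X_{i-1}))$ --- are all correct. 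The paper instead organizes the entire ascending analysis around a single lemma: if some iteration of the \emph{second} ascending sweep leaves $\rho$ unchanged, then the valuation after that sweep is already a solution; pre-solution-ness and absence of lower-saturation then follow uniformly, for every cycle length.

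Where your argument has a genuine gap is in verifying the hypotheses of Lemma~\ref{lem:sunny}, specifically the exclusion of lower-saturation. If $c_i$ is lower-saturated by $\rho_{\mathrm{asc}}$, then $X_i$ is indeed pinned at $f_i(\bot)$ throughout the run, but $f_i(\bot)$ is only known to satisfy $f_i(\bot) = \rho_{\mathrm{asc}}(X_i) \leq \mu(X_i)$, \emph{not} to equal $\mu(X_i)$. Hence ``the same in-place forward-finalization as above'' does not apply: that argument propagates exactness from a node whose current value equals its $\mu$-value, which is exactly what you lack here. What the two sweeps actually compute in this situation is the least solution $\mu'$ of the \emph{cut} system in which $c_i$ is replaced by the constant constraint $X_i \geq f_i(\bot)$; to reach the contradiction you must still argue that $\mu'$ satisfies $c_1$ (for $i \neq 1$ this holds because $c_1$ belongs to the cut system, while $i = 1$ requires a separate argument using the fact that $X_n$ is already final when $c_1$ is processed in the second sweep), and none of this appears in your text. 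Likewise, the $n=1$ pre-solution issue you flag has a one-line resolution you never give: when the test at line~9 fails for a self-loop, $\rho_{\mathrm{asc}}(X_1) < f_1(\rho_{\mathrm{asc}}(X_1))$, and this inequality is itself the statement that $\rho_{\mathrm{asc}}$ is a pre-solution. The paper's no-op lemma disposes of both points at once: lower-saturation of $c_i$ forces the $i$-th iteration of the second sweep to be a no-op (after the first sweep $\rho(X_i) \geq f_i(\bot)$, and by monotonicity of the run it can never exceed $f_i(\bot)$), hence $\rho_{\mathrm{asc}}$ is already a solution and line~9 returns before Lemma~\ref{lem:sunny} is ever needed.
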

\begin{proof}
    Let $\rho_1$, $\rho_2$, $\rho_3$, $\rho_4$, and $\rho_5$ be the value of $\rho$ just after the 1st, the 2sd, the 3th, the 4th and the 5th loops.\\

    Let us first show that if the $i_0$th iteration of the second loop does not modify the valuation $\rho$ then $\rho_2$ is a solution of $\CS$. Observe that the iterations $i_0,\ldots, n$ of the first loop and the iterations $1,\ldots,i_0-1$ of the second loop provide a valuation $\rho$ such that $\rho(X_i)\geq \crochet{c_i}(\rho)(X_i)$ for any $i\not=i_0$. As the $i_0$th iteration of the second loop does not modify $\rho$ we deduce that $\rho(X_{i_0})\geq \crochet{c_{i_0}}(X_{i_0})$. Therefore $\rho$ is a solution. We deduce that $\rho$ remains unchanged during the remaining iterations $i_0,\ldots,n$ of the second loop. Thus $\rho_2$ is a solution of $\CS$. 

    Assume that $\rho_2$ is not a pre-solution of $\CS$. There exists $i_0$ such that $\crochet{c_{i_0}}(\rho)(X_{i_0})\not\geq \rho(X_{i_0})$. We deduce that the value of $\rho$ has not been modified at the $i_0$th iteration of the 2sd loop. Thus, from the previous paragraph, $\rho_2$ is a solution. 

    Next, assume that a constraint $c_{i_0}$ is lower-saturated by $\rho_2$. Since after the $i_0$-iteration of the first loop we have $\rho(X_{i_0})\geq \crochet{c_{i_0}}(\bot)$, we deduce that the $i_0$th iteration of the second loop does not modify $\rho$. From the first paragraph we also deduce that $\rho_2$ is a solution of $\CS$.

    As the line 9 of the algorithm detects if $\rho_2$ is a solution, we can assume that $\rho_2$ is not a solution. From the two previous paragraph we deduce that $\rho_2$ is a pre-solution of $\CS$  and the constraints are not lower-saturated. From Lemma~\ref{lem:sunny} we deduce that $\crochet{C}^*(\rho_2)$ upper-saturates at least one constraint denoted by $c_{i_0}$. Observe that $\crochet{C}^*(\rho_2)=\crochet{C}^*(\rho_0)$.\\

    Let us show that $\crochet{c}(\crochet{C}^*(\rho_0))=\crochet{C}^*(\rho_0)$ for any constraint $c\in C$. Since $\rho_2$ is a pre-solution we get $\crochet{C}(\crochet{C}^*(\rho_2))=\crochet{C}^*(\rho_2)$. Moreover, as the output variables of two distinct constraints are distinct, we deduce that $\crochet{c}(\crochet{C}^*(\rho_2))=\crochet{C}^*(\rho_2)$ for any constraint $c\in C$. As $\crochet{C}^*(\rho_0)=\crochet{C}^*(\rho_2)$ we get the property.

    We deduce that the valuation $\rho'=\rho\wedge \crochet{c}(\rho)$ satisfies $\crochet{C}^*(\rho_0)\leq \rho'$ for any valuation $\rho$ such that $\crochet{C}^*(\rho_0)\leq \rho$ and for any constraint $c\in C$.\\

  After the 3th loop of the algorithm, we have $\crochet{C}^*(\rho_0)\leq \rho_3$, the previous paragraph proves that $\crochet{C}^*(\rho_0)\leq \rho$ is an invariant of the remaining of the program. Observe that at the $i_0$-th iteration of the 4th loop, we have $\rho(X_{i_0})=\crochet{C}^*(\rho_0)(X_{i_0})$. Thanks to the remaining iterations $i_0+1, \ldots, n$ of the 4th loop and the first iterations $1,\ldots i_0-1$ of the 5th loop, we get $\rho(X_{i})=\crochet{C}^*(\rho_0)(X_{i})$ for any $i$ since $\crochet{c}(\crochet{C}^*(\rho_0))=\crochet{C}^*(\rho_0)$ for any constraint $c\in C$. Thus at this point of the execution we have $\rho=\crochet{C}^*(\rho_0)$. Observe that $\rho$ is unchanged during the remaining iterations of the 5th loop. Thus, the algorithm returns $\crochet{C}^*(\rho_0)$.
\qed
\end{proof}

\section{Proof of Proposition~3.4}

\gdef\thesection{3}
\setcounter{theorem}{3}

\begin{proposition}
  The algorithm \textsf{SolveBI} returns the least solution $\crochet{C}^*(\rho_0)$ of a bounded-increasing constraint system $\CS$ greater than a valuation $\rho_0$. Moreover, the number of times the while loop is executed is bounded by one plus the number of constraints that are upper-saturated for $\crochet{C}^*(\rho_0)$ but not for $\rho_0$.
\end{proposition}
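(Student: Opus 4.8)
The plan is to split the statement into \emph{correctness} (on return, $\rho = \crochet{C}^*\!(\rho_0)$) and \emph{complexity} (the bound on the number of while-loop iterations, which also yields termination), and to prove the complexity part through a monotone potential that counts upper-saturated constraints.

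For correctness I would maintain the invariant $\rho_0 \sqsubseteq \rho \sqsubseteq \crochet{C}^*\!(\rho_0)$. It holds after line~3 because $\crochet{C}^*\!(\rho_0)$ is a post-fix-point above $\rho_0$ and therefore absorbs $\crochet{C}\!(\rho_0)$; it is preserved by each Round-Robin update by monotonicity, and by the cyclic-solve block since, by Lemma~\ref{lemma:fix-point-under-approximation} and the correctness of \textsf{CyclicSolve}, $\valren(\rho'') = \valren(\crochet{C'}^*\!(\backvalren(\rho))) \sqsubseteq \crochet{C}^*\!(\rho) \sqsubseteq \crochet{C}^*\!(\rho_0)$ (the last step because $\crochet{C}^*\!(\rho_0)$ is a post-fix-point). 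When the loop exits, $\rho$ is a post-solution with $\rho_0 \sqsubseteq \rho$, hence $\crochet{C}^*\!(\rho_0) \sqsubseteq \rho$, which together with the invariant yields $\rho = \crochet{C}^*\!(\rho_0)$. This is just the instantiation of the generic correctness of \textsf{Solve}.

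For the complexity bound I would track the set of upper-saturated constraints: it is monotone in $\rho$, stays inside the set for $\crochet{C}^*\!(\rho_0)$, and $\rho$ only grows, so it suffices to prove that every iteration except at most one strictly enlarges it. I would first record that, since $\rho \sqsupseteq \rho_0 \vee \crochet{C}\!(\rho_0)$ from line~3 onwards, every output value stays $\geq f(\bot)$ for its constraint, so each update is strict and leaves its constraint \emph{not} lower-saturated; in particular no $\lambda(X)$ is ever lower-saturated. Then I would analyse one iteration, with entry value $\mu_0$, post-Round-Robin value $\mu_1$ and exit value $\mu_2$, according to the test of line~11. \textbf{No cycle:} the edges $(X,\lambda(X))$ together with the input edges of the selected constraints form an acyclic graph whose selected constraints have pairwise distinct outputs; topologically ordering them as $c_1,\dots,c_m$ and inducting along the order gives $\mu_1 \sqsubseteq \crochet{c_1\cdots c_m}\!(\mu_0)$, while the $|C|+1 \geq m+1$ passes give the reverse inequality after $m$ passes, so the last pass witnesses $\mu_1 \sqsupseteq \crochet{c}\!(\mu_1)$ for all $c$; thus $\mu_2 = \mu_1$ is already a solution and this iteration must be the terminal one. \textbf{Cycle:} if $\mu_1$ already upper-saturates some constraint new with respect to $\mu_0$ we are done; otherwise the cycle constraints are neither lower- nor upper-saturated by $\mu_1$, and the bookkeeping of $\lambda$ shows $\mu_1(X_i) \sqsubseteq \crochet{c_i}\!(\mu_1)(X_i)$, so $\rho' = \backvalren(\mu_1)$ is a pre-solution of the cyclic flattening $\CS'$. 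Choosing the cycle variable $X_i$ updated \emph{last} during the Round-Robin, its strict increase propagates through the bounded-increasing, unsaturated constraint $c_{i+1}$ to give $\rho'(X_{i+1}) \not\sqsupseteq \crochet{c_{i+1}}\!(\rho')(X_{i+1})$, so $\rho'$ is not a solution of $\CS'$ and lower-saturates nothing; Lemma~\ref{lem:sunny} then forces $\rho'' = \crochet{C'}^*\!(\rho')$ to upper-saturate some $c'_j$, and since $\varren(Z_j) = X_j$ the update $\mu_2 = \mu_1 \vee \valren(\rho'')$ upper-saturates $c_j$, which $\mu_0$ did not.

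Finally I would assemble the count: a no-cycle iteration produces a solution and so can occur only as the last one, whereas each cycle iteration permanently adds a constraint to the (bounded) set of upper-saturated ones and these additions are distinct; hence the number of cycle iterations is at most the number of constraints upper-saturated for $\crochet{C}^*\!(\rho_0)$ but not for $\rho_0$, and adding the at-most-one terminal no-cycle iteration gives the stated bound, whence termination. I expect the two displayed cases to be the genuine work, and in particular the strictness step in the cyclic case --- combining bounded-increasingness with the ``neither saturated'' hypotheses to rule out $\rho'$ being a solution --- to be the main obstacle.
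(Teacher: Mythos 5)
Your proof is correct and follows essentially the same route as the paper's: the same potential argument counting upper-saturated constraints, the same topological-order induction showing that a cycle-free Round-Robin phase already yields a solution (making that iteration the last), and the same use of Lemma~3.2 applied to the last-updated cycle variable to force a newly upper-saturated constraint in the cyclic case. The only cosmetic difference is that you spell out the invariant $\rho_0 \sqsubseteq \rho \sqsubseteq \crochet{C}^*\!(\rho_0)$ and the resulting correctness on exit, which the paper delegates to the generic correctness of \textsf{Solve}.
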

\begin{proof}
  Note that $\lambda$ is a partially defined function from $\X$ to $C$. At the beginning of the while loop this function is empty. Then, it is updated when the algorithm replaces a valuation $\rho$ by $\rho\vee \crochet{c}(\rho)$. Denoting by $X$ the output variable of $c$, the value $\lambda(X)$ becomes equal to $c$. That means $\lambda$ keeps in memory the last constraint that have modified a variable. Observe also that initially $\rho = \rho_0 \vee \crochet{C}(\rho_0)$. Thus, if during the execution of the algorithm $\rho(X)$ is updates by a constraint $c$ then necessary $c$ is not lower-saturated. That means if $\lambda(X)$ is defined then $c=\lambda(X)$ is not lower-saturated.\\

  Let $\rho_0$ and $\rho_1$ be the values of $\rho$ respectively before and after the execution of the first two nested loops (line 5-9) and let $\rho_2$ be the value of $\rho$ after the execution of line~14.\\

  We are going to prove that if the sets of upper-saturated constraints for $\rho_0$ and $\rho_1$ are equal and if there does not exist a cycle satisfying the condition given line 10, then $\rho_1$ is a solution of $\CS$. Let us consider the subset set of constraints $C'=\{\lambda(X);\; X\in\X\}$ and let us consider the graph $G'$ associated to the constraint system $(\X,C')$. We construct the graph $G_1$ obtained from $G$ by keeping only the transitions $(X,c)$ if $c=\lambda(X)$ and the transitions $(X_i,c)$. Observe that $G_1$ is acyclic. Thus, there exists an enumeration $c_1,\ldots, c_m$ of the set of constraints $C'$ such that if there exists a path from $c_{i_1}$ to $c_{i_2}$ in $G_1$ then $i_1\leq i_2$. Let us denote by $X_i$ the output variable of $c_i$.
  
  Let us prove by induction over $i$ that for any $j\in\finiteset{1}{i}$ we have $\rho_1(X_j)\leq \crochet{c_1\ldots c_j}(\rho_0)(X_j)$. The rank $i=0$ is immediate since in this case $\finiteset{1}{i}$ is empty. Let us assume that rank $i-1<n$ is true and let us prove the rank $i$. Since $\lambda(X_i)=c_i$ we deduce that the valuation $\rho_1(X_i)$ has been modified thanks to $c_i$. Thus, denoting by $\rho$ the valuation in the algorithm just before this update, we deduce that $\rho_1(X_i)=\crochet{c_i}(\rho)(X_i)$ and $\rho_0\leq \rho\leq \rho_1$. Let us prove that $\rho(X_{i,j})\leq \crochet{c_1\ldots c_{i-1}}(\rho_0)(X_{i,j})$ for any input variable $X_{i,j}$ of $c_i$. Observe that if $X_{i,j}\in \X'$ then $\rho_1(X_{i,j})=\rho(X_{i,j})=\rho_0(X_{i,j})$ by construction of $\lambda$ and in particular $\rho(X_{i,j})\leq\crochet{c_1\ldots c_{i-1}}(\rho_0)(X_{i,j})$ since $c_1,\ldots,c_{i-1}$ do not modify the variable $X_{i,j}$. Otherwise, if $X_{i,j}\in\X'$, there exists $i'<i$ satisfying $X_{i,j}=X_{i'}$. By induction hypothesis, we have $\rho_1(X_{i'})\leq \crochet{c_1\ldots c_{i'}}(\rho_0)(X_{i'})$. Since $c_1,\ldots, c_m$ have distinct output variables, we deduce that $\crochet{c_1\ldots c_{i'}}(\rho_0)(X_{i'})=\crochet{c_1\ldots c_{i-1}}(\rho_0)(X_{i'})$. Thus $\rho_1(X_{i,j})\leq \crochet{c_1\ldots c_{i-1}}(\rho_0)(X_{i,j})$ and from $\rho\leq \rho_1$, we get $\rho(X_{i,j})\leq \crochet{c_1\ldots c_{i-1}}(\rho_0)(X_{i,j})$ for any input variable $X_{i,j}$. Therefore $\crochet{c_i}(\rho)(X_i)\leq \crochet{c_1\ldots c_{i}}(\rho_0)(X_{i})$. From $\rho_1(X_i)=\crochet{c_i}(\rho)(X_i)$, we get $\rho_1(X_i)\leq \crochet{c_1\ldots c_i}(\rho_0)(X_i)$ and we have proved the induction.
  
  We deduce the relation $\rho_1\leq \crochet{c_1\ldots c_m}(\rho_0)$ since $c_1$, ..., $c_m$ have distinct output variables. Observe that after the first execution of the loop 6-9, we get $\rho\geq \crochet{c_1}(\rho_0)$, after the second $\rho\geq \crochet{c_1.c_2}(\rho_0)$. By induction, after $m$ executions we get $\rho\geq \crochet{c_1\ldots c_m}(\rho_0)\geq \rho_1$. Since $m\leq |C|$, this loop is executed at least one more time. Note that after this execution, we have $\rho\geq \crochet{c}(\rho_1)$ for any $c\in C$. Since $\rho_1\geq \rho$, we have proved that $\rho_1\geq \crochet{C}(\rho_1)$. Therefore $\rho_1$ is a solution of $C$.\\
  
  Next, assume that there exists a cycle $X_0\rightarrow c_1\rightarrow X_1\cdots c_n\rightarrow X_n=X_0$ that satisfies $c_i=\lambda(X_i)$. From the first paragraph we deduce that $c_1,\ldots, c_n$ are not lower-saturated.  Let us prove that there exists a constraint upper-saturated for $\rho_2$ that is not upper-saturated for $\rho_0$. Naturally, if there exists a constraints upper-saturated from $\rho_1$ that is not upper-saturated for $\rho_0$, since $\rho_1\leq \rho_2$, we are done. Thus, we can assume that the constraints $c_1$, ..., $c_n$ are not upper-saturated for $\rho_1$. By definition of $\lambda$, we get $\rho_i(X_i)\leq \crochet{c_i}(\rho)$. Thus $\rho'$ is a pre-solution of $\CS'$. Let $X_i$ be the last variable amongst $X_0,\ldots,X_{n-1}$ that have been updated. Since $c_{i+1}$ is not upper-saturated and not lower-saturated for $\rho_1$ and since the value of $X_i$ has increased when this last update appeared, we deduce that $\rho'(X_{i+1})\not\geq \crochet{c_{i+1}}(\rho')(X_{i+1})$. Thus $\rho'$ is not a solution and from lemma \ref{lem:sunny} we deduce that $\rho''$ upper-saturates at least one constraint $c_i$. Thus $\rho_2$ upper-saturates a constraints that is not upper-saturated by $\rho_1$.\\

  Finally, note that each time the while loop is executed at least one bounded-increasing constraint becomes upper-saturated. As every upper-saturated constraint remains upper-saturated, we are done.
\qed
\end{proof}

} 

\end{document}